\newtheorem{thm}{Theorem}[section]
\newtheorem{cor}[thm]{Corollary}
\newtheorem{lem}[thm]{Lemma}
\newtheorem{prop}[thm]{Proposition}
\theoremstyle{definition}
\newtheorem{defn}[thm]{Definition}
\newtheorem{rem}[thm]{Remark}
\numberwithin{equation}{section}
\newcommand{\norm}[1]{\left\Vert#1\right\Vert}
\renewcommand{\sc}[2]{\langle #1|#2 \rangle}
\newcommand{\pair}[2]{\left\langle #1 , #2 \right\rangle}
\newcommand{\proj}[2]{|#1 \rangle \langle #2|}
\newcommand{\R}{\mathbb R}
\newcommand{\C}{\mathbb C}
\newcommand{\N}{\mathbb N}
\renewcommand{\H}{\mathcal{H}}
\newcommand{\g}{\mathfrak{g}}
\DeclareMathOperator{\Gr}{Gr_{res}(\H,\H_+)}
\newcommand{\U}{\mathrm{U}(\H)}
\newcommand{\Up}{\mathrm{U}(\H_+)}\newcommand{\Um}{\mathrm{U}(\H_-)}
\newcommand{\Ur}{\mathrm{U}_{\textnormal{res}}(\H)}
\newcommand{\Urt}{\widetilde{\mathrm{U}}_{\textnormal{res}}(\H)}
\newcommand{\UU}{\mathcal U(\H)}
\newcommand{\uup}{\mathfrak u(\H_+)}\newcommand{\uum}{\mathfrak u(\H_-)}
\newcommand{\uujp}{\mathfrak u^1(\H_+)}\newcommand{\uujm}{\mathfrak u^1(\H_-)}
\newcommand{\ur}{\mathfrak u_{\textnormal{res}}(\H)}
\newcommand{\urp}{\mathfrak u_{\textnormal{res}}^1(\H)}
\newcommand{\urt}{\widetilde{\mathfrak u}_{\textnormal{res}}(\H)}
\newcommand{\urtp}{\widetilde{\mathfrak u}_{\textnormal{res}}^1(\H)}
\newcommand{\glrp}{\mathfrak{gl}^1_\textnormal{res}(\H)}
\DeclareMathOperator{\Tr}{Tr}
\DeclareMathOperator{\Trr}{Tr_{res}}
\DeclareMathOperator{\ad}{ad}
\DeclareMathOperator{\Ad}{Ad}
\DeclareMathOperator{\const}{const}
\DeclareMathOperator{\im}{im}
\newcommand{\Li}{L^\infty(\H)}
\newcommand{\Lj}{L^1(\H)}
\newcommand{\Ljm}{L^1(\H_-)}\newcommand{\Ljp}{L^1(\H_+)}
\newcommand{\Ld}{L^2(\H)}
\newcommand{\Ldpm}{L^2(\H_+,\H_-)}
\newcommand{\Lp}{L^p(\H)}
\newcommand{\abs}[1]{\left\vert#1\right\vert}
\newcommand{\set}[1]{\left\{#1\right\}}
\newcommand{\pb}{\{\cdot,\cdot\}}
\renewcommand{\to}{\rightarrow}
\newcommand{\tto}{\longrightarrow}
\newcommand{\be}{\begin{equation}}
\newcommand{\ee}{\end{equation}}
\newcommand{\bse}{\begin{subequations}}
\newcommand{\ese}{\end{subequations}}
\newcommand{\ben}{\begin{enumerate}}
\newcommand{\een}{\end{enumerate}}
\begin{document}
\allowdisplaybreaks

\newcommand{\arXivNumber}{2407.21605}

\renewcommand{\PaperNumber}{104}

\FirstPageHeading

\ShortArticleName{Geometry of Integrable Systems Related to the Restricted Grassmannian}

\ArticleName{Geometry of Integrable Systems Related\\ to the Restricted Grassmannian}

\Author{Tomasz GOLI\'NSKI~$^{\rm a}$ and Alice Barbora TUMPACH~$^{\rm b}$}

\AuthorNameForHeading{T.~Goli\'nski and A.B.~Tumpach}

\Address{$^{\rm a)}$~University of Bia{\l}ystok, Cio{\l}kowskiego 1M, 15-245 Bia{\l}ystok, Poland}
\EmailD{\href{mailto:tomaszg@math.uwb.edu.pl}{tomaszg@math.uwb.edu.pl}}
\URLaddressD{\url{https://tomaszg.pl/math/}}

\Address{$^{\rm b)}$~Institut CNRS Pauli, UMI CNRS 2842, Oskar-Morgenstern-Platz 1, 1090 Wien, Austria}
\EmailD{\href{mailto:barbara.tumpach@math.cnrs.fr}{barbara.tumpach@math.cnrs.fr}}
\URLaddressD{\url{https://geometricgreenlearning.com/}}

\ArticleDates{Received August 01, 2024, in final form November 12, 2024; Published online November 22, 2024}

\Abstract{A hierarchy of differential equations on a Banach Lie--Poisson space related to the restricted Grassmannian is studied. Flows on the groupoid of partial isometries and on the restricted Grassmannian are described, and a momentum map picture is presented.}

\Keywords{integrable systems; momentum map; Banach Lie--Poisson spaces; partial isometries; restricted Grassmannian; Magri method}

\Classification{70H06; 53D20; 53D17; 46T05; 47B10}

\section{Introduction}

In this paper, we are dealing with integrable systems, introduced first in \cite{GO-grass}, on a certain Banach Lie--Poisson space related to the restricted Grassmannian $\Gr$. Let us clarify that by the term ``integrable system'' in infinite dimensions we mean a system (or hierarchy) of differential equations possessing an infinite family of independent commuting integrals of motion, see, e.g., introduction in \cite{hitchin-segal-ward} or \cite{bolsinov-na4}. We do not imply complete integrability and do not claim to be able to write down general solutions.
However, in particular, cases we are able to linearize these systems and obtain explicit formulas for solutions.

The restricted Grassmannian is an infinite-dimensional K\"ahler manifold modelled on a Hilbert space that first appeared in the theory of the Korteweg--de Vries equation, see \cite{sato-sato,segal-wilson}, and is also related to fermionic second quantization of quantum field theory \cite{chiumiento24,mickelsson,wurzbacher} and loop groups \cite{segal, segal-wilson}. Using the theory of Banach Lie--Poisson spaces \cite{OR, Oext}, it was shown in~\cite{Ratiu-grass} that the restricted Grassmannian is an affine co-adjoint orbit of the Banach Lie group~$\Ur$ consisting of unitary operators on a polarized Hilbert space $\mathcal{H}$ with off-diagonal blocks Hilbert--Schmidt. It turns out that $\Ur$ is a Banach Poisson--Lie group in a non-trivial way \cite{tumpach-bruhat} and that the restricted Grassmannian inherits from $\Ur$ a non-trivial Poisson structure whose symplectic leaves are the Bruhat--Schubert cells.
Moreover, the (co-)tangent space of the restricted Grassmannian is naturally endowed with a structure of strong infinite-dimensional hyperk\"ahler manifold \cite{tumpach-hyperkahler}, which can be identified with a complex affine co-adjoint orbit of the complexification of $\Ur$ \cite{tumpach-hyperkaehler2}. The diffeomorphism between (co-)tangent space and complex orbit exists for all Hermitian-symmetric affine co-adjoint orbits of compact type \cite{tumpach-hyperkaehler2,tumpach-classification}. For Hermitian-symmetric affine co-adjoint orbits of non-compact type, like the restricted Siegel disc~\cite{T-siegel}, the existence of analogous hyperk\"ahler structures is investigated in~\cite{GBRT-hyperkaehler}.

Let us recall that affine co-adjoint orbits of Banach--Lie groups are co-adjoint orbits of their central extensions. In particular, the Lie algebra $\ur$ of the group $\Ur$ possesses a central extension $\urt$ given by the cocycle called Schwinger term, which describes the failure of the fermionic second quantization procedure of being a Lie algebra homomorphism \cite{mickelsson,schwinger,wurzbacher}. From our point of view, it is interesting that this central extension leads to a bi-Hamiltonian structure (i.e., a pencil of compatible Poisson brackets) on the predual Banach space $\urp$ of the Banach Lie algebra $\ur$. Namely, the Banach Lie--Poisson bracket of the central extension of $\urp$ as introduced in \cite{Ratiu-grass} leads to the pencil of Poisson brackets on $\urp$ if we reinterpret the variable of central extension as a pencil parameter. In this approach, we can discover a~version of a ``frozen bracket'' by Mishchenko and Fomenko \cite{fomenko-trofimov}, but with differences due to the subtleties of the infinite-dimensional case. In the paper \cite{GO-grass}, a~family of Casimirs for this pencil was studied. Using Magri method, an infinite family of integrals of motion in involution for a~hierarchy of Hamilton equations is constructed.

Many properties of this hierarchy of equations were studied in the papers \cite{GO-grass-bial,GO-grass,GO-grass2} and more recently in \cite{GT-partiso-bial}. These papers also include some examples, including finite-dimensional ones. Moreover, the paper \cite{GO-grass2} describes possible physical interpretation of these integrable systems to the description of the interaction of electromagnetic waves with a nonlinear dielectric medium, including Kerr effect and parametric conversion. Finding more applications of the results is still an open problem. The aim of the present paper is to present a more geometrical picture of some of the results mentioned above, and to study the flows induced by those equations on the set of partial isometries.

Note here that the set of partial isometries of a given Hilbert space (or even in an abstract~$C^*$ or $W^*$-algebra) possesses a structure of Banach Lie groupoid, see \cite{BGJP,OS}. An analogous structure can also be introduced for the set of partial isometries associated to the restricted Grassmannian, see~\cite{GJS-partiso}.

The paper is organized as follows. Section \ref{sec:gr} is devoted to preliminary notions about Banach Lie--Poisson spaces, the restricted Grassmannian and to the construction of a pencil of Poisson brackets. Section \ref{sec:int} recalls the construction of a hierarchy of integrable systems from \cite{GO-grass} adapting some formulas to the particular case discussed in the present paper.
Section \ref{sec:momentum} describes the momentum map for the hierarchy and comments on the structure of the symplectic leaves.
Section \ref{sec:partiso} deals with the particular case of the equations under an assumption that one diagonal block vanishes. Under this condition the system descends to the Banach Lie groupoid of partial isometries. The partial isometries evolve in a particular way which preserves the initial space. An example solution in the case of partial isometries of rank one is also presented.
Section \ref{sec:gr-lin} is dedicated to studying the considered system when restricted to a family of particular coadjoint orbits for the coadjoint action of the central extension $\Urt$ of the Banach Lie group $\Ur$, which are diffeomorphic (even symplectomorphic) to the restricted Grassmannian $\Gr$. It was shown in \cite{GO-grass} that using homogeneous coordinates on $\Gr$ the systems become linear. In this section, a more general result is obtained --- namely that the systems become linear on the coadjoint orbit itself. A solution for some equations from the hierarchy on $\Gr$ is presented.

\section[Restricted Grassmannian, related Banach Lie algebras, and Poisson brackets]{Restricted Grassmannian, related Banach Lie algebras,\\ and Poisson brackets}\label{sec:gr}

In this section, we will recall necessary information about the restricted Grassmannian and objects around it, which will be used in the paper. More detailed exposition and background can be found, e.g., in \cite{Ratiu-grass,GO-grass,mickelsson,OR,segal,wurzbacher}.

A fundamental notion needed in the present paper is the notion of Banach Lie--Poisson space, which was introduced in the paper \cite[Definition 4.1 and Theorem 4.2]{OR}. It was useful, e.g., in the study of such integrable systems as the infinite Toda lattice \cite{Oind}. Later the notion was also extended to arbitrary duality pairing in \cite[Definition 3.12 and Theorem 3.14]{tumpach-bruhat}, but this extension is not used in the present paper.

\begin{defn}\label{def:blp}
A Banach Lie--Poisson space is a Banach space $\g_*$ predual to a Banach Lie algebra $\g$ such that $\g_*\subset \g^*$ is preserved by the coadjoint action $\ad^*\colon\g^*\to\g^*$,
$\ad^*_\g \g_*\subset \g_*$
together with the canonical structure of Banach Poisson manifold given by the bracket
\[ 
\{f,g\}(\mu) = \langle \mu, [Df(\mu),Dg(\mu)] \rangle \]
for $f,g\in C^\infty(\g_*)$.
\end{defn}

In the formula above, we treat the derivatives $Df(\mu)$ and $Dg(\mu)$ as elements of the Banach Lie algebra $(\g_*)^*=\g$. The Hamiltonian vector field for a Hamiltonian $h\in C^\infty(\g_*)$ with respect to this bracket assumes the form
\be \label{blp-hvf}
X_h(\mu) = -\ad^*_{Dh(\mu)}\mu.\ee

Now, in order to define the restricted Grassmannian, we consider a complex separable Hilbert space $\H$ with a fixed polarization, i.e., orthogonal decomposition
\be\label{polarization}\H=\H_+\oplus \H_-.\ee
We will denote by $P_+$ and $P_-$ the orthogonal projectors onto $\H_+$ and $\H_-$ respectively. It is usually assumed that both closed subspaces $\H_\pm$ are infinite-dimensional.
Finite-dimensional case can also be studied. However, in that case some geometrical properties break down, e.g., $\Gr$ as defined below is no longer a homogeneous space (in particular, it contains Grassmannians of subspaces of different fixed finite dimensions). An example of solution of the equations considered in the present paper in the finite-dimensional case was presented in \cite{GT-partiso-bial}.

Given the polarization \eqref{polarization}, one introduces a block decomposition of an operator $A$ acting on $\H$
\be \label{blocks}
A=\left(
\begin{matrix}
 A_{++} & A_{+-}\\
 A_{-+} & A_{--}
\end{matrix}
\right).\ee
To simplify the notation, we will sometimes identify the operators $A_{\pm\pm}\colon\H_\pm\to\H_\pm$ with $P_\pm A P_\pm\colon\H\to\H$ when it will lead to no confusion.

Let $\Lp$ denote the Schatten class of operators acting on $\H$ equipped with the norm
\[
 \norm{A}_p = \bigl(\Tr \abs A^p\bigr)^{1/p}.
 \]
The $\Lp$ spaces are (not closed) ideals in the $C^*$-algebra $\Li$ of bounded operators in~$\H$. Note that norm-closure of any space $\Lp$ is the ideal of compact operators, which is the only closed ideal in $\Li$. In particular, $\Lj$ denotes the ideal of trace-class operators and $\Ld$ is the ideal of Hilbert--Schmidt operators. An important property is that the dual spaces of these ideals are $\Li$ and $\Ld$ respectively and the pairing is given by the trace.

\begin{defn}
The \emph{restricted Grassmannian} $\Gr$ is the set of closed subspaces ${W\subset\H}$ such that
\ben[(i)]\itemsep=0pt
\item the orthogonal projection $p_+\colon W\to \H_+$ is a Fredholm operator;
\item the orthogonal projection $p_-\colon W\to \H_-$ is a Hilbert--Schmidt operator.
\een
\end{defn}

It is sometimes useful to work with orthogonal projection onto the element $W\in \operatorname{Gr}_{\rm res}(\H, \H_+)$, which we will denote by $P_W$. In this way, one can identify the set $\Gr$ with the set of orthogonal projections $\{ P_W \mid W\in\Gr\}$ in $\H$.

In the paper \cite{spera-valli}, it was demonstrated that one can easily describe the projections coming from $\Gr$.
\begin{prop}\label{prop:gr-proj}
\[ W\in\Gr \ \Longleftrightarrow \ P_W-P_+ \in \Ld. \]
\end{prop}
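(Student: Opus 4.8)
The plan is to prove both implications by unwinding the two defining conditions of $\Gr$ in terms of the block decomposition \eqref{blocks} relative to the polarization $\H = \H_+ \oplus \H_-$. Writing $P_W$ for the orthogonal projection onto $W$ and $P_+$ for the projection onto $\H_+$, the claim is that $P_W - P_+$ is Hilbert--Schmidt precisely when $W \in \Gr$.

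First I would handle the direction $W \in \Gr \Rightarrow P_W - P_+ \in \Ld$. The orthogonal projections $p_\pm \colon W \to \H_\pm$ can be written as $p_\pm = P_\pm|_W$. Condition (ii) says $p_-$ is Hilbert--Schmidt; I would first observe that the off-diagonal block $(P_W - P_+)_{-+} = P_- P_W P_+$ and, more usefully, that $P_W P_-$ (equivalently $P_- P_W$, up to adjoint) differs from $p_-$ composed with the inclusion by a bounded operator, so Hilbert--Schmidtness of $p_-$ forces $P_- P_W$ and $P_+ P_W P_-$ to be Hilbert--Schmidt. Thus the two off-diagonal blocks of $P_W - P_+$ are in $\Ld$. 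For the diagonal blocks, I would use the operator identity $P_W = P_W^2$ together with $P_W = P_+ + (P_W - P_+)$: expanding and comparing diagonal blocks expresses $(P_W - P_+)_{++}$ and $(P_W - P_+)_{--}$ in terms of products involving the off-diagonal blocks, hence as products of two Hilbert--Schmidt operators, which are trace-class and in particular Hilbert--Schmidt. (Condition (i), Fredholmness of $p_+$, guarantees that $W$ is genuinely "close" to $\H_+$ — e.g.\ that $P_+ - (P_W)_{++}$ is compact rather than merely bounded — but the quantitative Hilbert--Schmidt estimate comes from the idempotent identity.) Putting the four blocks together gives $P_W - P_+ \in \Ld$.

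For the converse, suppose $P_W - P_+ \in \Ld$. Then every block of $P_W - P_+$ is Hilbert--Schmidt; in particular $P_- P_W \in \Ld$, and since $p_- = P_-|_W$ is the restriction of $P_- P_W$ to $W$ (followed by nothing), $p_-$ is Hilbert--Schmidt, giving (ii). For (i), I would write $p_+ = P_+|_W = (P_+ + (P_W - P_+))|_W$; modulo a Hilbert--Schmidt (hence compact) operator, $p_+$ agrees with $P_+|_W$ viewed against $P_W$, and one checks that $1_{\H_+} - p_+ p_+^*$ and $1_W - p_+^* p_+$ are expressible through the blocks of $P_W - P_+$, hence compact. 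A bounded operator that is invertible modulo compacts on both sides is Fredholm, so $p_+$ is Fredholm, giving (i). Hence $W \in \Gr$.

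The main obstacle is bookkeeping rather than depth: one must be careful that the "projections" $p_\pm$ are maps out of the abstract subspace $W$, not endomorphisms of $\H$, so translating their Fredholm/Hilbert--Schmidt properties into statements about the genuine block operators $P_\alpha P_W P_\beta$ requires identifying $W$ with $\im P_W$ and checking that the relevant identifications are bounded with bounded inverse on the relevant pieces. The one genuinely algebraic input is the idempotent relation $P_W^2 = P_W$, which is exactly what converts "off-diagonal blocks in $\Ld$" into "diagonal blocks in $\Ld$" and thereby closes the argument; I would state this identity explicitly and expand it blockwise as the technical heart of the proof. This is essentially the computation carried out in \cite{spera-valli}.
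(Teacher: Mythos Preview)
The paper does not actually prove this proposition; it is quoted from \cite{spera-valli} with no argument given, so there is no in-paper proof to compare against.

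Your converse direction is fine. The forward direction, however, has a genuine gap at the $++$ block. Writing $A=P_+P_WP_+$, $B=P_+P_WP_-$, $D=P_-P_WP_-$ and expanding $P_W^2=P_W$ blockwise yields
\[
A-A^2=BB^*,\qquad D-D^2=B^*B,
\]
and these relations do \emph{not} express $1_{\H_+}-A$ or $D$ themselves as products of two Hilbert--Schmidt operators; they only say that $K(1-K)$ is trace-class for $K=1_{\H_+}-A$ and for $K=D$. For a positive contraction this is perfectly compatible with $K$ having essential spectrum at~$1$: take $W\subset\H_+$ of infinite codimension in $\H_+$, so that $p_-=0$ and $B=0$, yet $1_{\H_+}-A$ is an infinite-rank projection and hence not Hilbert--Schmidt. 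The block $D$ can be rescued directly, since $D=(P_-P_W)(P_WP_-)$ with $P_-P_W$ Hilbert--Schmidt by condition~(ii). But for $1_{\H_+}-A$ you genuinely need condition~(i), and not as a side remark. One clean route: from~(ii), $p_+^*p_+=1_W-p_-^*p_-$ is a trace-class perturbation of the identity on $W$, so its essential spectrum is $\{1\}$; since the non-zero spectra of $p_+^*p_+$ and $p_+p_+^*=A$ agree with multiplicities, the essential spectrum of $A$ lies in $\{0,1\}$; Fredholmness of $p_+$ then excludes $0$, so $K=1_{\H_+}-A$ is compact; and \emph{only now} does the relation $K(1-K)=BB^*$ bootstrap $K$ to trace-class. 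Your parenthetical has the dependency inverted: the idempotent identity alone cannot deliver the Hilbert--Schmidt estimate for $(P_W-P_+)_{++}$, and condition~(i) is doing essential work there.
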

One also introduces the Banach Lie group $\Ur$ acting transitively on $\Gr$,
\[
\Ur:=\bigl\{ u\in \U \mid [u,P_+]\in \Ld\bigr\},
\]
where $\U$ is the full unitary group of operators acting on $\H$. The Banach Lie algebra of~$\Ur$~is
\[
\ur:=\bigl\{ X\in \Li \mid X^* = -X, [X,P_+]\in \Ld\bigr\}.\]
Note that the commutator condition in these definitions means that off-diagonal blocks of the operators with respect to the block decomposition \eqref{blocks} are Hilbert--Schmidt operators. Instead of considering the commutator with $P_+$, one can equivalently write this condition as $[u, \epsilon] = 0$ with an involution $\epsilon = P_+-P_- = 2P_+-1$. In the case of the group $\Ur$, invertibility implies that diagonal blocks are Fredholm.

The restricted Grassmannian $\Gr$ can be viewed as the homogeneous space $\Ur/\allowbreak(\Up\times\Um)$, where $\Up\times\Um$ is the stabilizer of $\H_+$ with respect to the natural action.
Moreover, the group $\Ur$ possesses a structure of Banach Poisson--Lie group, see~\cite[Theorem 7.12]{tumpach-bruhat} and the subgroup $\Up\times\Um$ is a Banach Poisson--Lie
subgroup \cite[Proposition 8.2]{tumpach-bruhat}. Consequently, the restricted Grassmannian is not only a symplectic manifold (as affine coadjoint orbit, see~\cite{Ratiu-grass}) but admits also a non-trivial Bruhat Poisson structure, see~\cite[Theorem 8.3]{tumpach-bruhat}.

In the papers \cite{Ratiu-grass,Oext}, a structure of Banach Lie--Poisson space on the predual
\[
 \urp:=\bigl\{ \mu\in \ur \mid \mu_{++}\in \Ljp , \mu_{--}\in \Ljm\bigr\}
 \]
of $\ur$ was investigated, see Definition \ref{def:blp}.
The duality pairing between $\mu\in\urp$ and $A\in\ur$ is given by
\be \label{pair-urp}\pair\mu A:=\Trr(\mu A),\ee
where $\Trr$ is the \emph{restricted trace} defined on $\urp$ by
$\Trr \mu :=\Tr(\mu_{++}+\mu_{--})$.
Note that~$\Trr$ is defined on a larger domain than $\Lj$ and it coincides with the standard trace $\Tr$ on trace-class operators.
The properties of the restricted trace are similar to the properties of the standard trace but one needs to be more careful. For instance, one has
\be \label{trr-cyclic}\Trr(\mu\nu)=\Trr(\nu\mu)\ee
for $\mu\in\urp$ and $\nu\in\ur$,
see \cite{GO-grass} for details.

From the pairing \eqref{pair-urp}, we conclude that $\bigl(\urp\bigr)^*\cong \ur$, i.e., the Banach space
$\urp$ is predual to $\ur$.
It is straightforward that $\urp$ is preserved by the coadjoint action due to the fact that $L^p$ spaces are ideals. Thus, $\urp$ is a Banach Lie--Poisson space with the Poisson bracket
\be \label{pb-0}\{f,g\}_0(\mu) =
\Trr\bigl(\mu [Df(\mu),Dg(\mu)]\bigr).\ee

Next step is to construct a central extension of $\ur$ by a cocycle called the Schwinger term (see \cite{Ratiu-grass, GO-grass, mickelsson,schwinger, wurzbacher})
\be \label{schwinger}s(X,Y)=\Tr(X_{+-}Y_{-+}-Y_{+-}X_{-+}),\ee
where $X, Y \in \ur$.
The cocycle $s$ gives rise to a Banach Lie algebra structure on
\[\urt:=\ur\oplus {\rm i}\R\]
with the following Lie bracket
$ [(X,\gamma),(Y,\gamma')]=\bigl([X,Y],-s(X,Y)\bigr)$.

Naturally, the predual of $\urt$ is
$\urtp:=\urp\oplus {\rm i}\R$
with the pairing given by
$ \pair{(\mu,\gamma)}{(X,\gamma)}_{\sim}=\Trr(\mu X)+\gamma\lambda$,
for $\mu\in\urp$, $X\in\ur$, $\gamma,\lambda\in {\rm i}\R$.
In consequence, the Banach space $\urtp$ also possesses a structure of Banach Lie--Poisson space and the Poisson bracket is given by
\begin{align*}
 \{F,G\}(\mu,\gamma)& = \pair{(\mu,\gamma)}{[DF(\mu,\gamma),DG(\mu,\gamma)]}_\sim\\
& =\Trr\bigl(\mu [D_1F(\mu,\gamma),D_1G(\mu,\gamma)]\bigr)-\gamma s(D_1F(\mu,\gamma),D_1G(\mu,\gamma)),
\end{align*}
where $D_1$ is the derivation with respect to the first argument of functions $F,G\in C^\infty\bigl(\urtp\bigr)$.
Note that since the extension is central, there is no derivative with respect to $\gamma$ in this Poisson bracket. Thus, we can
consider the variable $\gamma$ as a parameter and obtain a pencil of Poisson brackets on $\urp$
\be \label{pb-pencil}\{f,g\}_\gamma(\mu) =
\{f,g\}_0(\mu) - \gamma \{f,g\}_s(\mu) \ee
for $f,g\in C^\infty\bigl(\urp\bigr)$, where $\pb_0$ is the Lie--Poisson bracket \eqref{pb-0} of $\urp$ and $\pb_s$ is the Schwinger bracket
\be \label{pb-s}\{f,g\}_s(\mu) = s(Df(\mu),Dg(\mu))= \Tr (Df(\mu)_{+-}Dg(\mu)_{-+} - Dg(\mu)_{+-}Df(\mu)_{-+}).\ee

\begin{rem}\label{frozen}
While working with the restricted trace one needs to keep in mind that in general it is not possible to perform cyclic permutations in expressions containing commutators. As an example, consider the Schwinger term \eqref{schwinger} which can be expressed as
\begin{equation}\label{schwinger_expression}
s(X, Y) = \Trr\bigl(X[Y,P_+]\bigr),
\end{equation}
where $X$ and $Y$ belong to $\ur$. The result of a cyclic permutation would be, for example, $\Trr(P_+[X, Y])$, which does not make sense in general.

If one disregards that for a second, one can write a formal expression
\[ \{f,g\}_s(\mu)=-\Trr\bigl( P_+ [DF(\mu),DG(\mu)]\bigr). \]
It looks just like a ``frozen bracket'' discovered by Mishchenko and Fomenko, but the freezing point $P_+$ does not lie in the Banach Lie--Poisson space. See, e.g., \cite{bols-bor,fomenko-trofimov} for the review of this approach in the finite-dimensional setting. Generally when one considers a Banach Lie--Poisson bracket on the predual $\g_*$ of a Banach Lie algebra, one can consider the freezing point as any element of $\g_*$. However, the formula makes sense also for an element of $\g^*$, which in infinite-dimensional case can be larger. Additionally, the closed subspace spanned by the commutators of elements in $\g$ can be significantly smaller than the whole $\g$ even taking into account zero-trace condition, see, e.g., \cite{kalton89} and references therein. Regretfully in the case we consider, $P_+$ does not belong even to the dual of the subspace generated by commutators of operators from $\ur$. Even though, this insight allows one to correctly predict the form of the Casimir functions.
\end{rem}

There exists a central $U(1)$-extension $\Urt$ of the group $\Ur$ for which $\urt$ is the Lie algebra. Note that there is no continuous global cross-section $\Urt\to\Ur$. For the details of the construction of $\Urt$, we refer the reader to \cite{mickelsson,segal,wurzbacher}. For the purpose of this paper it is enough to know the form of the coadjoint action of $\Urt$ on $\urtp$, which can be found in \cite{Ratiu-grass,T-siegel,GO-grass}
\be\label{coad-ext}
\Ad^*_{\Gamma}(\mu,\gamma) = \bigl(g^{-1}\mu g + \gamma \bigl(P_+ - g^{-1}P_+ g\bigr), \gamma\bigr),
\ee where $\Gamma\in\Urt$ projects down to $g \in \Ur$.

\section[Hierarchy of integrable systems on u\*1\_res(H)]{Hierarchy of integrable systems on $\boldsymbol{\urp}$}\label{sec:int}

In order to introduce the integrable systems, which are the subject of this paper, we are looking for Casimirs of the pencil \eqref{pb-pencil}, i.e., a family of functions $I_\gamma^n$ on $\urp$ such that
\be\label{casimir}
\{I^n_\gamma,\cdot\}_\gamma = 0.
\ee
Following the viewpoint of Remark \ref{frozen} it is natural to apply a kind of ``shift of argument'' approach to functions $\Trr\mu^n$, which are Casimirs for $\pb_0$. In this case, the argument would need to be shifted by the operator $P_+$, which, as mentioned before, does not belong to $\urp$. In effect the trace $\Trr$ would not make sense. The idea is to add extra terms to the Casimirs for the formula to be well defined. In \cite{GO-grass}, a family of Casimirs was found following this general idea (see formula~(3.7) there). Originally, it was written for the complex Banach Lie--Poisson space $\glrp := \urp \oplus i \urp$ but we present here the expression adapted to the case of the real Banach Lie--Poisson space $\urp$
\be \label{cas}I^n_\gamma(\mu):={\rm i}^{n+1}\Trr\bigl( (\mu-\gamma P_+)^{n+1}-(-\gamma)^{n}(\mu-\gamma P_+)\bigr).\ee
The proof that those are indeed Casimirs goes through the application of the coadjoint representation \eqref{coad-ext} of the central extension $\Urt$, see \cite[equation~(3.2)]{GO-grass}.

We will use the following fact from the Magri method approach (see, e.g., \cite{bols-bor,laurent-gengoux,magri} or~\cite[Appendix~B]{GO-grass}).
\begin{prop}
The coefficients of the expansion of the Casimirs of the Poisson bracket $\pb_\gamma$ in terms of the pencil parameter $\gamma$ are in involution with respect to all the brackets from this pencil. In consequence, the flows of the associated Hamiltonian vector fields commute and are contained in the intersection of the symplectic leaves of those brackets.
\end{prop}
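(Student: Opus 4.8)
The plan is to establish the Magri-type result in the standard way, adapted to the subtlety that the pencil $\pb_\gamma$ is obtained from a genuine Banach Lie--Poisson bracket on the central extension $\urtp$ by treating the central variable $\gamma$ as a parameter. First I would recall the general lemma underlying the Magri method: if $C_\gamma$ is a function depending polynomially (or formally) on a parameter $\gamma$ such that $\{C_\gamma,\cdot\}_\gamma=0$ for all $\gamma$, and we expand $C_\gamma=\sum_k \gamma^k c_k$, then for any two values $\gamma_1,\gamma_2$ one has $\{c_j,c_k\}_{\gamma_i}=0$ for all $j,k$. The computational heart is the identity $\{C_{\gamma_1},C_{\gamma_2}\}_{\gamma}=0$ for all $\gamma,\gamma_1,\gamma_2$, which follows by writing $\pb_\gamma$ as an affine function of $\gamma$ (namely $\pb_\gamma=\pb_0-\gamma\pb_s$, see \eqref{pb-pencil}) and using the Casimir property at the two points $\gamma_1$ and $\gamma_2$ to kill the relevant terms; a short interpolation/linear-algebra argument in the parameters then forces each bracket $\{c_j,c_k\}_\gamma$ to vanish identically.

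Concretely, the key steps in order are: (1) fix $f:=I^{j}_{\gamma_1}$-type coefficient functions, i.e.\ work with the expansion of $I^n_\gamma$ from \eqref{cas} in powers of $\gamma$; it is harmless to phrase everything in terms of the full generating functions $I^n_{\gamma_1}$ and $I^n_{\gamma_2}$. (2) Compute $\{I^n_{\gamma_1},I^m_{\gamma_2}\}_\gamma$ using bilinearity of the pencil: $\{I^n_{\gamma_1},I^m_{\gamma_2}\}_\gamma = \{I^n_{\gamma_1},I^m_{\gamma_2}\}_0 - \gamma\{I^n_{\gamma_1},I^m_{\gamma_2}\}_s$. (3) Evaluate the two terms on the right using the Casimir identity \eqref{casimir}: since $\{I^m_{\gamma_2},\cdot\}_{\gamma_2}=0$ we get $\{I^n_{\gamma_1},I^m_{\gamma_2}\}_0 = \gamma_2\{I^n_{\gamma_1},I^m_{\gamma_2}\}_s$, and similarly using $\{I^n_{\gamma_1},\cdot\}_{\gamma_1}=0$ one rewrites the Schwinger-bracket term; combining these linear relations shows $\{I^n_{\gamma_1},I^m_{\gamma_2}\}_\gamma$ is proportional to $(\gamma_1-\gamma)(\gamma_2-\gamma)$ times something symmetric, and since this must hold for a range of $\gamma$ while the left side has bounded degree, all coefficients vanish. (4) Expand in $\gamma_1,\gamma_2$ to read off that the Taylor coefficients of the $I^n_\gamma$ are pairwise in involution with respect to every $\pb_\gamma$. (5) Conclude that the Hamiltonian vector fields commute (Poisson-commuting Hamiltonians have commuting flows, using \eqref{blp-hvf}), and that the flow of each lies in a common symplectic leaf of each bracket since each $c_k$ is constant along the others' flows, hence the trajectory stays inside level sets that cut out the leaves.

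The main obstacle I anticipate is purely bookkeeping of the parameter dependence: one must be careful that the expansion of $I^n_\gamma$ in \eqref{cas} really is polynomial in $\gamma$ of controlled degree (it is, since $(\mu-\gamma P_+)^{n+1}$ expands into finitely many terms), so that the "a polynomial of degree $\le N$ in $\gamma$ that vanishes for infinitely many $\gamma$ is zero" argument applies. The genuine infinite-dimensional subtlety flagged in Remark \ref{frozen} --- that cyclic permutations under $\Trr$ are illegal and that $P_+\notin\urp$ --- does \emph{not} reappear here, because the Casimir property \eqref{casimir} is taken as given (it was proved in \cite{GO-grass} via the coadjoint action \eqref{coad-ext}, not via naive cyclicity), and the Magri argument only manipulates the already-established brackets $\pb_0$, $\pb_s$ and the known Casimir identities; all traces that occur are of trace-class operators. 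So after citing the abstract Magri lemma and checking the polynomial-degree bound, the proof is essentially a two-line linear-algebra manipulation in the pencil parameters, followed by the standard observation about commuting flows and invariant leaves.
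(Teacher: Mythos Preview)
The paper does not supply its own proof of this proposition; it is quoted as a standard fact from the Magri method, with pointers to \cite{bols-bor,laurent-gengoux,magri} and \cite[Appendix~B]{GO-grass}. Your outline is precisely the classical argument found in those references, so you are aligned with what the paper relies on.

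Two small corrections to your sketch. In step~(3) the two Casimir identities give
\[
\{I^n_{\gamma_1},I^m_{\gamma_2}\}_0=\gamma_1\{I^n_{\gamma_1},I^m_{\gamma_2}\}_s
\quad\text{and}\quad
\{I^n_{\gamma_1},I^m_{\gamma_2}\}_0=\gamma_2\{I^n_{\gamma_1},I^m_{\gamma_2}\}_s,
\]
hence $(\gamma_1-\gamma_2)\{I^n_{\gamma_1},I^m_{\gamma_2}\}_s=0$ directly; there is no quadratic factor $(\gamma_1-\gamma)(\gamma_2-\gamma)$, and the conclusion for $\gamma_1\neq\gamma_2$ is immediate, then extended by polynomiality. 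In step~(5), the reason the flows preserve the symplectic leaves of \emph{both} brackets is not that the integrals are constant along the flows, but that via the Magri chain \eqref{magri-chain} each flow is simultaneously Hamiltonian for $\pb_0$ and for $\pb_s$, and Hamiltonian vector fields are automatically tangent to the symplectic leaves of the bracket they come from.
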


Here, for simplicity, we expand the Casimirs into the series with respect to the parameter~$-\gamma$ instead. We denote by $W^n_k$ the polynomials defined by
$(\mu+\gamma P_+)^n=\sum_{k=0}^n\gamma^k W_k^n(\mu)$.
They are polynomials of degree $n-k$ in $\mu$ and contain up to $k$ occurrences of the projector $P_+$.
This leads to the expansion of Casimirs in the form
\be \label{W-expansion}I^n_\gamma(\mu)={\rm i}^{n+1}\sum_{k=0}^{n}(-\gamma)^k\Trr W_k^{n+1}(\mu)+{\rm i}^{n+1}(-\gamma)^{n}\Trr\mu.\ee
In this way, we obtain a family of functions on $\urp$
\be\label{calki}
h_k^n(\mu)={\rm i}^{n+1}\Trr W_k^{n+1}(\mu),\qquad 0\leq k\leq n.\ee
Plugging in the formula \eqref{calki} into the condition \eqref{casimir} gives rise to a bi-Hamiltonian hierarchy of equations of motion
\be \label{magri-chain}
\{h_k^n,\cdot\}_0=\{h_{k+1}^n,\cdot\}_s\ee
known also as a Magri chain.
Iterating the relationship \eqref{magri-chain}, we conclude that functions $h_k^n$
are in involution on $\urp$ for all Poisson brackets
$ \{h^n_k,h^m_l\}_0=\{h^n_k,h^m_l\}_s=0$
and thus can be considered as a family of Hamiltonians for an integrable system.

Note that we discarded the last term of \eqref{W-expansion} as it is a Casimir for $\pb_0$ and $\pb_s$. Hence, does not lead to any dynamics.

The Hamilton equations for the functions $h_k^n$ with respect to the Lie--Poisson bracket $\pb_0$ assume the following Lax form:
\be \label{W-eq}\frac{\partial}{\partial\tau_k^n} \mu = - {\rm i}^{n+1}(n+1) [\mu, W_k^n(\mu)] \ee
or, equivalently,
\be \label{W-eq-P}\frac{\partial}{\partial\tau_k^n} \mu = {\rm i}^{n+1}(n+1) [P_+, W_{k-1}^n(\mu)],\ee
where we have used the bi-Hamiltonian condition \eqref{magri-chain} and where $\tau_k^n$ denotes the time variable corresponding to the Hamiltonian $h^n_k$.
However, in equations \eqref{W-eq} and \eqref{W-eq-P} the functions $W^n_k$ are linearly dependent. For instance, one can observed that \[
W^k_{k-1}(\mu) = \mu P_+ + P_+ \mu + (k-2)P_+\mu P_+
\] for $k\geq 2$. Consequently, the family $\bigl\{W^k_{k-1}(\mu)\bigr\}_{k\geq 2}$ spans only a two-dimensional subspace generated by $\mu P_+ + P_+\mu$ and $P_+\mu P_+$. Thus, one might modify the system in question by introducing linearly independent homogeneous polynomials
\be \label{Hnk} H_k^n(\mu):= \sum_{
\genfrac{}{}{0pt}1{i_0,i_1,\dots, i_n\in\{0,1\}}{i_0+\dots +i_n=k}
} P_+^{i_0}\mu P_+^{i_1}\mu\cdots \mu P_+^{i_n}\ee
of degree $n\in\N$ in the operator variable $\mu\in \urp$ and degree $k$ in the operator $P_+$, where $k\leq n+1$.
In \cite[Proposition~3.1]{GO-grass}, it was shown that one can express the polynomials $\{W^n_k\}_{0\leq k\leq n\in\N } $ as linear combinations of $\{H_k^n\}_{0\leq k\leq n+1\in\N }$ and vice-versa.
Now consider the following hierarchy of Lax equations:
\be \label{H-eq}\frac\partial{\partial t_k^n}\mu = {\rm i}^{n+1} [\mu, H_k^n(\mu)],\ee
where $n\in\N$ and $k=1,\dots, n+1$ and the time variables $t^n_k$ are linear combinations of the time variables $\tau^n_k$. The corresponding flows commute since the right-hand side of \eqref{H-eq} are vector fields on $\urp$ that can be expressed as linear combinations of the commuting Hamiltonian vector fields in \eqref{W-eq} corresponding to the functions $h^n_k$. These flows are still bi-Hamiltonian with respect to $\pb_0$ and $\pb_s$ (since the space of Hamiltonian vector fields is a vector space). However, the explicit expression for the corresponding Hamiltonians is not easy to write.

Moreover, the polynomials $\{H_k^n\}_{0\leq k\leq n+1\in\N }$ satisfy the recurrence relation
\be \label{H-recc} H^{n+1}_{k+1}(\mu)=P_+\mu H_k^n(\mu)+\mu H^n_{k+1}(\mu)\ee
for $n\in \N$, where for consistency we use the convention $H^n_k=0$ for $k>n+1$ or $k<0$.
One can also write a ``dual'' recurrence relation by applying adjoint to \eqref{H-recc}
\be\label{H-recc-dual} H^{n+1}_{k+1}(\mu)=H_k^n(\mu)\mu P_+ +H^n_{k+1}(\mu)\mu \ee
for $n\in \N$.

\begin{prop}\label{prop:diag-const}
The diagonal blocks $\mu_{++}$ and $\mu_{--}$ are constants of motion
\smash{$\frac\partial{\partial t_k^n}\mu_{++} = 0$}, \smash{$\frac\partial{\partial t_k^n}\mu_{--} = 0$}.
\end{prop}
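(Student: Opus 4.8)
The plan is to show that the Hamiltonian vector field on the right-hand side of \eqref{H-eq} has vanishing diagonal blocks, so that each $t_k^n$-flow leaves $\mu_{++}$ and $\mu_{--}$ untouched. The key observation is a block-structure fact about the polynomials $H_k^n(\mu)$: every monomial $P_+^{i_0}\mu P_+^{i_1}\mu\cdots\mu P_+^{i_n}$ appearing in \eqref{Hnk} with $k\geq 1$ begins or ends with at least one factor of $P_+$ (since $i_0+\dots+i_n=k\geq 1$ forces some $i_j=1$, and by reorganizing the sum one sees that for $k\geq 1$ the terms organize into ones with $i_0=1$ and ones with $i_n=1$; more directly, one uses the recurrences). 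In fact I would argue it is cleanest to extract from \eqref{H-recc} and \eqref{H-recc-dual} that $P_- H_k^n(\mu) P_- = 0$ for all $k\geq 1$: the relation \eqref{H-recc} gives $H^{n+1}_{k+1}(\mu)=P_+\mu H_k^n(\mu)+\mu H^n_{k+1}(\mu)$, and sandwiching by $P_-$ on the left kills the first term and reduces the second to $P_-\mu_{-+}P_+ \cdots$, which has no $P_-\cdots P_-$ part; combined with \eqref{H-recc-dual} the same sandwiching on the right finishes an induction on $n$, with base case $H^0_1=P_+$ trivially satisfying $P_- P_+ P_- = 0$ and $H^0_0=1$ excluded since $k\geq 1$.

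Granting $P_- H_k^n(\mu) P_- = 0$ for $k\geq 1$, the $(--)$-block of the velocity is
\be
\frac{\partial}{\partial t_k^n}\mu_{--} = {\rm i}^{n+1}\,P_-[\mu,H_k^n(\mu)]P_- = {\rm i}^{n+1}\bigl(P_-\mu H_k^n(\mu)P_- - P_-H_k^n(\mu)\mu P_-\bigr).
\ee
Now $P_-\mu H_k^n(\mu)P_- = P_-\mu(P_++P_-)H_k^n(\mu)P_- = \mu_{-+}\bigl(P_+H_k^n(\mu)P_-\bigr) + \mu_{--}\bigl(P_-H_k^n(\mu)P_-\bigr)$, and the second summand vanishes. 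Symmetrically, $P_-H_k^n(\mu)\mu P_- = \bigl(P_-H_k^n(\mu)P_+\bigr)\mu_{+-} + \bigl(P_-H_k^n(\mu)P_-\bigr)\mu_{--}$, whose second summand also vanishes. So the surviving terms are $\mu_{-+}\cdot P_+H_k^n(\mu)P_-$ and $P_-H_k^n(\mu)P_+\cdot\mu_{+-}$, both of which are genuinely present in general. Hence my first instinct — that the whole block vanishes termwise — is not quite enough; I would need the stronger structural claim that these two contributions cancel. This is the step I expect to be the main obstacle.

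To overcome it, I would use the recurrence one more time, but now to compute the off-diagonal corners $P_+H_k^n(\mu)P_-$ and $P_-H_k^n(\mu)P_+$ relative to each other. Alternatively — and I think this is the better route — I would not compute blocks at all, but instead invoke the bi-Hamiltonian identity: by \eqref{W-eq-P} the $\tau$-flows are $\frac{\partial}{\partial\tau_k^n}\mu = {\rm i}^{n+1}(n+1)[P_+, W_{k-1}^n(\mu)]$, and since the $H$-flows are by construction linear combinations of the $\tau$-flows, it suffices to show $\frac{\partial}{\partial\tau_k^n}\mu_{\pm\pm}=0$, i.e., that $[P_+, W_{k-1}^n(\mu)]$ has vanishing diagonal blocks. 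But $[P_+, A]_{++} = P_+(P_+A - AP_+)P_+ = P_+AP_+ - P_+AP_+ = 0$ for any operator $A$, and likewise $[P_+,A]_{--}=0$; a commutator with a projection is automatically purely off-diagonal. This gives $\frac{\partial}{\partial\tau_k^n}\mu_{++}=\frac{\partial}{\partial\tau_k^n}\mu_{--}=0$ at once, and the same conclusion for the $t_k^n$-flows follows by linearity, proving the proposition. The first block computation above, while correct, is therefore a distraction; the clean argument is that \eqref{W-eq-P} exhibits every velocity in the hierarchy as $[P_+,\,\cdot\,]$, which has no diagonal part.
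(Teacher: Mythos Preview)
Your final argument --- that by \eqref{W-eq-P} every velocity in the hierarchy is of the form $[P_+,\,\cdot\,]$, hence has vanishing diagonal blocks, and that the $t^n_k$-flows inherit this by linearity --- is correct and is precisely the paper's proof. For the record, the lemma $P_- H_k^n(\mu) P_- = 0$ you tried to grant in the abandoned first approach is in fact false (already $P_- H_1^2(\mu) P_- = P_-\mu P_+\mu P_- = \mu_{-+}\mu_{+-}$), so that route would not have gone through even before the cancellation issue you flagged.
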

\begin{proof}
A commutator of an operator with $P_+$ has diagonal blocks equal to zero. Thus, from~\eqref{W-eq-P} it follows that the flows given by Hamiltonians $h^n_k$ keep the diagonal blocks constant.
Equations~\eqref{H-eq} are linear combinations of equations \eqref{W-eq-P}, so they retain this property.
\end{proof}

This fact has also two geometrical explanations, which we will present in the next section.

\section{Momentum map and symplectic leaves}\label{sec:momentum}

\subsection{Momentum map and Noether theorem}
First, let us notice that there is a natural coadjoint action of the Banach Lie group $\Ur$ on the Banach Lie--Poisson space $\urp$. We will write the momentum map in this case following the conventions from \cite[Definition 11.2.1]{ratiu-ms}. Since we are dealing with non-reflexive Banach spaces here, we replace the dual of the Lie algebra with a predual according to \cite[Definition~8.1]{OR}.

More precisely, $\urp$ sits inside the dual space to the Banach Lie algebra of $\Ur$, and is stable by coadjoint action. The left coadjoint action $g\cdot \mu$ of $g\in \Ur$ on $\mu\in \urp$ is defined by duality as
\[
\langle g\cdot \mu , \xi \rangle := \bigl\langle \mu , \Ad_g^{-1}(\xi)\bigr\rangle = \Trr \mu g^{-1}\xi g = \Trr g \mu g^{-1} \xi = \bigl\langle g\mu g^{-1} , \xi \bigr\rangle,
\]
where $\mu \in \urp$, $\xi\in \ur$ and $g\in \Ur$, where we have used the property \eqref{trr-cyclic}. The non-degeneracy of the duality pairing \eqref{pair-urp} then implies
$
g\cdot \mu = g \mu g^{-1} = \Ad^*_g(\mu)$.
The infinitesimal generator \smash{$X^\xi$} for the infinitesimal coadjoint action of $\xi \in \ur$ is defined as
\be\label{infinitesimal} X^\xi(\mu) = - \ad^*_\xi(\mu) = -\Trr \mu [\xi, \cdot],\ee
where $\mu\in\urp$. Using \eqref{trr-cyclic} and the duality pairing \eqref{pair-urp}, it follows that
$
 X^\xi(\mu) = [\xi,\mu]\in \urp$.
\begin{defn}
A momentum map for the action of a Lie group $G$, with Lie algebra $\g$, on a~Poisson manifold $M$ is a function $J\colon M\rightarrow \g^*$ such that, for $\mu\in M$ and $\xi \in \g$,
the function~${\mu\mapsto j_\xi(\mu) := \langle J(\mu) , \xi \rangle}$ is a Hamiltonian for the infinitesimal action of $\xi$ on $M$, i.e.,
\[
\{j_\xi, f\}(\mu) = X^\xi(f)(\mu) = \bigl\langle Df(\mu), X^\xi(\mu)\bigr\rangle.
\]
\end{defn}
In our case, the momentum map takes naturally values in the predual $\uujp\oplus\uujm$ of the Lie algebra $\uup\oplus\uum$:
\begin{prop}
The momentum map $J\colon\urp \to \uujp\oplus \uujm$ for the action of the subgroup $\Up\times\Um\subset \Ur$ on the Poisson manifold $\bigl(\urp, \pb_0\bigr)$ is
$J(\mu) = p_D(\mu)$,
where $p_D$ is the projection onto block-diagonal part
\[
 p_D(\mu) = P_+ \mu P_+ + P_- \mu P_- \in\uujp\oplus\uujm.\]
Moreover, it is equivariant with respect to the action of the group $\Up\times\Um$.
\end{prop}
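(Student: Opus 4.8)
The plan is to verify directly that $J(\mu) = p_D(\mu)$ satisfies the defining property of a momentum map for the action of $\Up\times\Um$ on $(\urp, \pb_0)$, and then to check equivariance. First I would unwind what must be shown: for $\xi = (\xi_+,\xi_-)\in\uup\oplus\uum$, viewed as the block-diagonal operator $\xi = \xi_+ \oplus \xi_-\in\ur$ commuting with $P_+$, the paired function is $j_\xi(\mu) = \pair{p_D(\mu)}{\xi} = \Trr(p_D(\mu)\,\xi) = \Trr(\mu\xi)$, using that $\xi$ is block-diagonal so the off-diagonal blocks of $\mu$ drop out of the trace. Hence $j_\xi$ is just the restriction to block-diagonal $\xi$ of the standard linear coordinate function $\mu\mapsto\Trr(\mu\xi)$, whose derivative is the constant $D j_\xi(\mu) = \xi\in\ur$.

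Next I would compute both sides of the momentum map identity. On one side, the Lie--Poisson bracket \eqref{pb-0} gives $\{j_\xi, f\}_0(\mu) = \Trr\bigl(\mu[\xi, Df(\mu)]\bigr)$. On the other side, by \eqref{infinitesimal} and the computation right after it, the infinitesimal generator of the coadjoint action of $\xi$ is $X^\xi(\mu) = [\xi,\mu]\in\urp$, so $X^\xi(f)(\mu) = \pair{Df(\mu)}{[\xi,\mu]} = \Trr\bigl(Df(\mu)[\xi,\mu]\bigr)$. It then remains to observe that $\Trr\bigl(\mu[\xi,Df(\mu)]\bigr) = \Trr\bigl(Df(\mu)[\xi,\mu]\bigr)$; this is a cyclicity manipulation, and here one must be careful because of the subtleties with $\Trr$ emphasized in Remark \ref{frozen}. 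The safe way is to use \eqref{trr-cyclic}, which permits $\Trr(\mu\nu) = \Trr(\nu\mu)$ when $\mu\in\urp$ and $\nu\in\ur$: expand $\Trr(\mu\xi Df(\mu)) - \Trr(\mu Df(\mu)\xi)$ and $\Trr(Df(\mu)\xi\mu) - \Trr(Df(\mu)\mu\xi)$, and match the four terms pairwise using that $\xi$ is bounded and $\mu\in\urp$ so each product lies in a class where \eqref{trr-cyclic} applies. This identifies $\{j_\xi,f\}_0(\mu)$ with $X^\xi(f)(\mu)$, establishing that $J = p_D$ is a momentum map; one should also note $p_D(\mu)\in\uujp\oplus\uujm$, which is immediate since $\mu_{++}\in\Ljp$ and $\mu_{--}\in\Ljm$ by the definition of $\urp$.

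For equivariance I would use that for $g = (g_+,g_-)\in\Up\times\Um$ the coadjoint action on $\urp$ is $\Ad^*_g\mu = g\mu g^{-1}$ (shown earlier in this section), and that $g$ is block-diagonal and commutes with $P_\pm$; hence $p_D(g\mu g^{-1}) = P_+ g\mu g^{-1} P_+ + P_- g\mu g^{-1} P_- = g_+(P_+\mu P_+)g_+^{-1} \oplus g_-(P_-\mu P_-)g_-^{-1} = \Ad^*_g(p_D(\mu))$, where on the right $\Ad^*$ denotes the coadjoint action of $\Up\times\Um$ on $\uujp\oplus\uujm$. So $J\circ\Ad^*_g = \Ad^*_g\circ J$, which is $\Ad^*$-equivariance. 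The main obstacle is not conceptual but bookkeeping: one must justify every cyclic permutation under $\Trr$ by reducing to the licensed identity \eqref{trr-cyclic} rather than permuting blindly, exactly as warned in Remark \ref{frozen}; once the trace manipulations are pinned down, both the momentum map property and equivariance follow by direct substitution.
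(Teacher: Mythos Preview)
Your proposal is correct and takes essentially the same approach as the paper. The only difference is cosmetic: the paper shortcuts the verification by directly comparing the Hamiltonian vector field formula \eqref{blp-hvf} with the infinitesimal generator formula \eqref{infinitesimal} (both equal $-\ad^*_\xi\mu$ once $Dj_\xi=\xi$), rather than expanding $\{j_\xi,f\}_0$ and $X^\xi(f)$ separately and matching them via the cyclicity \eqref{trr-cyclic} --- a manipulation that is effectively the same one already used just above to establish $X^\xi(\mu)=[\xi,\mu]$.
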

\begin{proof}
Note that it is just the usual momentum map for the action of the Lie subalgebra $\mathfrak h\subset \g$ on the Lie--Poisson space $\g_*$ (or $\g^*$ in finite-dimensional context), see, e.g., \cite[Example~11.4.g]{ratiu-ms}.

Let us however go through the calculations to better illustrate the current context.
Comparing~\eqref{infinitesimal} with \eqref{blp-hvf}, we see that the vector field $X^\xi$ generated by the infinitesimal coadjoint action of $\xi\in \uup\oplus\uum$ is the Hamiltonian vector field corresponding to a function $j_\xi$ whose differential $Dj_\xi(\mu)$ at $\mu$ is simply $\xi$. One sets the integration constant to zero in order to obtain a linear map
$ j_\xi(\mu) = \pair\mu\xi$.
Now, since $\xi$ belongs to the Lie subalgebra $\uup\oplus\uum$, we would like to write $j_\xi(\mu)$ as a pairing with an element in $\uujp\oplus\uujm$. Using the block decomposition \eqref{blocks}, one sees that
$ j_\xi(\mu) = \pair{p_D(\mu)}\xi$,
where $p_D(\mu)\in \uujp\oplus\uujm$. It follows that the momentum map $J\colon \urp\rightarrow \uujp\oplus\uujm$ for the infinitesimal coadjoint action of $\uup\oplus\uum$ is
$ J(\mu) = p_D(\mu)$.
Equivariance means
$J(g\cdot\mu) = \Ad^*_g J(\mu)$
and is now straightforward.
\end{proof}

\begin{lem}
The family of Casimirs \eqref{cas} is preserved by the action of the subgroup $\Up\times\Um\subset \Ur$.
\end{lem}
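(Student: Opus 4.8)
The plan is to compute the value of the Casimir \eqref{cas} at $g\cdot\mu=g\mu g^{-1}$, for $g$ in the stabilizer subgroup $\Up\times\Um$, and to check that it coincides with $I^n_\gamma(\mu)$. The decisive structural fact is that every $g\in\Up\times\Um$ is block-diagonal with respect to the polarization, hence commutes with the projector: $gP_+g^{-1}=P_+$.

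Concretely, I would set $B:=\mu-\gamma P_+$ and write $I^n_\gamma(\mu)={\rm i}^{n+1}\Trr A$ with $A:=B^{n+1}-(-\gamma)^{n}B$; recall that the combination $A$ was engineered in \cite{GO-grass} precisely so that its diagonal blocks $A_{++}$, $A_{--}$ are trace-class, which is what gives $\Trr A$ a meaning. Since conjugation by $g$ is an algebra automorphism of $\Li$ and $gP_+g^{-1}=P_+$, we get $gBg^{-1}=g\mu g^{-1}-\gamma P_+$, and therefore $gAg^{-1}=\bigl(g\mu g^{-1}-\gamma P_+\bigr)^{n+1}-(-\gamma)^{n}\bigl(g\mu g^{-1}-\gamma P_+\bigr)$; in other words $gAg^{-1}$ is exactly the operator entering the definition of $I^n_\gamma$ evaluated at $g\cdot\mu$, so that $I^n_\gamma(g\cdot\mu)={\rm i}^{n+1}\Trr\bigl(gAg^{-1}\bigr)$.

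It then remains to see that $\Trr\bigl(gAg^{-1}\bigr)=\Trr A$, and this is the one step that needs care: as stressed in Remark \ref{frozen} one may not simply permute cyclically under $\Trr$, and $g$, $g^{-1}$ do not lie in $\ur$, so \eqref{trr-cyclic} does not apply. The remedy is to use again that $g$ is block-diagonal: then so is $gAg^{-1}$, with $(gAg^{-1})_{\pm\pm}=g_{\pm\pm}A_{\pm\pm}g_{\pm\pm}^{-1}$, and since $A_{\pm\pm}$ is trace-class the genuine cyclicity of $\Tr$ gives $\Tr\bigl(g_{\pm\pm}A_{\pm\pm}g_{\pm\pm}^{-1}\bigr)=\Tr A_{\pm\pm}$; adding the two diagonal contributions yields $\Trr(gAg^{-1})=\Trr A$, and hence $I^n_\gamma(g\cdot\mu)=I^n_\gamma(\mu)$. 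A shortcut would be to quote \eqref{coad-ext}: for $g\in\Up\times\Um$ the correction term $P_+-g^{-1}P_+g$ vanishes, so the coadjoint action of $\Urt$ restricts to ordinary conjugation over this subgroup, and the already-established invariance of $I^n_\gamma$ under the coadjoint action of $\Urt$ gives the claim at once; I would nonetheless keep the elementary block-diagonal computation, since it is self-contained and makes transparent exactly where the subtlety of $\Trr$ enters.
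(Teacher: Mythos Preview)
Your proof is correct and follows the same route as the paper: use that $g\in\Up\times\Um$ commutes with $P_+$ to factor $g$ and $g^{-1}$ out of the polynomial, then invoke cyclicity of the restricted trace. The paper simply cites \eqref{trr-cyclic} for that last step, whereas you justify it via the block-diagonal structure of $g$ and the ordinary cyclicity of $\Tr$ on trace-class blocks --- a welcome bit of extra care, since \eqref{trr-cyclic} as stated applies to $\nu\in\ur$, not to unitaries.
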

\begin{proof}
Since the subgroup $\Up\times\Um\subset \Ur$ is the stabilizer of $P_+$, for any element~${u\in\Up\times\Um}$, we have $u^* P_+ u = P_+$. Using the formula \eqref{cas}, we obtain
\begin{align*} I^n_\gamma(u^* \mu u):={}&{\rm i}^{n+1}\Trr\bigl( (u^*\mu u-\gamma P_+)^{n+1}-(-\gamma)^{n}(u^*\mu u-\gamma P_+)\bigr)\\
={}&
 {\rm i}^{n+1}\Trr u^*\bigl( (\mu-\gamma P_+)^{n+1}-(-\gamma)^{n}(\mu-\gamma P_+)\bigr)u.\end{align*}
We get the claim using \eqref{trr-cyclic}.
\end{proof}

\begin{lem}
The Hamiltonians \eqref{calki} are also invariant with respect to the action of the subgroup~${\Up\times\Um\subset \Ur}$.
\end{lem}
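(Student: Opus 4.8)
The plan is to deduce the invariance of the Hamiltonians $h_k^n$ directly from the invariance of the Casimirs $I^n_\gamma$ that was just established in the preceding lemma. Recall from \eqref{W-expansion} that
\[
I^n_\gamma(\mu)={\rm i}^{n+1}\sum_{k=0}^{n}(-\gamma)^k\Trr W_k^{n+1}(\mu)+{\rm i}^{n+1}(-\gamma)^{n}\Trr\mu
={\rm i}^{n+1}\sum_{k=0}^{n}(-\gamma)^k h_k^{n}(\mu)+{\rm i}^{n+1}(-\gamma)^{n}\Trr\mu,
\]
where in the last step I used the definition \eqref{calki} of $h_k^n$. Thus $I^n_\gamma(\mu)$, regarded as a function of $\mu$ for fixed $n$, is a polynomial in $-\gamma$ whose coefficients are exactly the $h_k^n(\mu)$ for $0\le k\le n$ (plus the coefficient ${\rm i}^{n+1}\Trr\mu$ at order $-\gamma^n$, combined with $h_n^n$). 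The idea is that two polynomials in $\gamma$ that agree for all values of $\gamma$ have equal coefficients, so invariance of the whole family $\{I^n_\gamma\}_\gamma$ forces invariance of each coefficient.

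Concretely, I would fix $u\in\Up\times\Um$ and $n\in\N$. The previous lemma gives $I^n_\gamma(u^*\mu u)=I^n_\gamma(\mu)$ for every $\gamma\in{\rm i}\R$. Expanding both sides via \eqref{W-expansion} yields
\[
{\rm i}^{n+1}\sum_{k=0}^{n}(-\gamma)^k h_k^{n}(u^*\mu u)+{\rm i}^{n+1}(-\gamma)^{n}\Trr(u^*\mu u)
={\rm i}^{n+1}\sum_{k=0}^{n}(-\gamma)^k h_k^{n}(\mu)+{\rm i}^{n+1}(-\gamma)^{n}\Trr\mu.
\]
Since $\Trr(u^*\mu u)=\Trr\mu$ by \eqref{trr-cyclic} (the extra diagonal term is $u^*$-invariant), the last terms cancel on each side and we are left with an identity of polynomials in the variable $-\gamma$ of degree at most $n$, valid for infinitely many values of $\gamma$. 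Matching coefficients of $(-\gamma)^k$ for each $k=0,\dots,n$ gives $h_k^n(u^*\mu u)=h_k^n(\mu)$, which is the claimed $\Up\times\Um$-invariance. Equivalently, one can argue at the level of the polynomials $W^{n+1}_k$ themselves: the conjugation-invariance computation in the proof of the previous lemma actually shows $\Trr\big(u^*(\mu-\gamma P_+)^{n+1}u\big)=\Trr(\mu-\gamma P_+)^{n+1}$, and expanding $(\mu-\gamma P_+)^{n+1}=\sum_k(-\gamma)^k W^{n+1}_k(\mu)$ and comparing coefficients gives $\Trr W^{n+1}_k(u^*\mu u)=\Trr W^{n+1}_k(\mu)$ directly.

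There is no real obstacle here; the only point requiring a word of care is the legitimacy of comparing coefficients, i.e.\ that these are genuine polynomial identities in $\gamma$ rather than identities holding only asymptotically, and that the restricted trace $\Trr$ behaves well termwise in the binomial-type expansion of $(\mu-\gamma P_+)^{n+1}$ — but this is exactly the content of the expansion \eqref{W-expansion}, which was already recorded, together with the remark there that each $W^{n+1}_k(\mu)$ lies in a space on which $\Trr$ is defined. Hence the proof is essentially a one-line consequence: \emph{expand the identity of the previous lemma in powers of $\gamma$ and match coefficients.}
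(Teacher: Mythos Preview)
Your argument is correct, and the alternative you sketch at the end is exactly the paper's proof. The paper does not invoke the preceding lemma but argues directly: since $u^*P_+u=P_+$ for $u\in\Up\times\Um$, one has the \emph{operator} identity $(u^*\mu u+\gamma P_+)^n=u^*(\mu+\gamma P_+)^n u$, and matching coefficients of $\gamma^k$ gives $W^n_k(u^*\mu u)=u^*W^n_k(\mu)u$; then \eqref{trr-cyclic} finishes. Your primary route --- deducing the scalar identities $h^n_k(u^*\mu u)=h^n_k(\mu)$ from the polynomial identity $I^n_\gamma(u^*\mu u)=I^n_\gamma(\mu)$ and coefficient comparison --- is equally valid and only marginally less direct; the paper's version has the tiny advantage of yielding the stronger operator-level equivariance $W^n_k(u^*\mu u)=u^*W^n_k(\mu)u$ along the way.
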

\begin{proof}
Since we have
\[ (u^* \mu u+\gamma P_+)^n=u^*(\mu +\gamma P_+)^n u=u^*\Biggl(\sum_{k=0}^n\gamma^k W_k^n(\mu)\Biggr)u, \]
we conclude that
$W^n_k(u^*\mu u) = u^* W^n_k(\mu) u$.
In effect $h^n_k(u^*\mu u) = h^n_k(\mu)$ by application of formula~\eqref{trr-cyclic}.
\end{proof}

\begin{cor}
By two previous lemmas, it follows by means of the Noether theorem $($see, e.g., {\rm\cite[\emph{Theorem}~8.2]{OR})} that the level sets $J^{-1}(c)$, $c\in \uujp\oplus \uujm$, are conserved by the Hamiltonian flows, which is equivalent to the observation that $\mu_{++}$ and $\mu_{--}$ are constant.
\end{cor}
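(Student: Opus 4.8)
The plan is to combine the two preceding lemmas with the Noether theorem. The key point is that the momentum map $J(\mu) = p_D(\mu) = \mu_{++}+\mu_{--}$ has components $j_\xi(\mu) = \pair{p_D(\mu)}{\xi}$ which are, for each fixed $\xi\in\uup\oplus\uum$, Hamiltonian functions generating the infinitesimal action of $\xi$. Noether's theorem in the form of \cite[Theorem~8.2]{OR} says that if a Hamiltonian $h$ is invariant under the $G$-action (here $G=\Up\times\Um$), then the momentum map $J$ is constant along the flow of $X_h$, i.e.\ each $j_\xi$ is a conserved quantity for that flow; equivalently, $J$ itself (and hence each level set $J^{-1}(c)$) is preserved.

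First I would invoke the two lemmas just proved: the Casimirs $I^n_\gamma$ are $\Up\times\Um$-invariant, and so are the Hamiltonians $h^n_k$ (which are, after all, coefficients in the $\gamma$-expansion of the $I^n_\gamma$, so invariance of the latter already forces invariance of the former, but the second lemma gives it directly). Second, since $h^n_k\in C^\infty(\urp)$ is invariant under the action of $\Up\times\Um$, Noether's theorem applies to the flow of the Hamiltonian vector field $X_{h^n_k} = \partial/\partial\tau^n_k$ with respect to $\pb_0$ and yields $\frac{\partial}{\partial\tau^n_k}J(\mu) = 0$, i.e.\ $\frac{\partial}{\partial\tau^n_k}\big(\mu_{++}+\mu_{--}\big) = 0$. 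Third, I would note that $\mu_{++}\in\Ljp$ and $\mu_{--}\in\Ljm$ live in complementary summands, so the vanishing of the derivative of the sum is equivalent to the vanishing of each derivative separately: $\frac{\partial}{\partial\tau^n_k}\mu_{++} = 0$ and $\frac{\partial}{\partial\tau^n_k}\mu_{--} = 0$. Finally, since the time variables $t^n_k$ are by construction linear combinations of the $\tau^n_k$, the flows $\partial/\partial t^n_k$ inherit the same conservation law, which is exactly the content of Proposition~\ref{prop:diag-const}, now re-derived from the momentum-map picture.

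There is essentially no serious obstacle here; the statement is a direct corollary, and the work has already been done in the two lemmas and in the identification $J=p_D$. The only point requiring a little care is the infinite-dimensional bookkeeping: one must check that Noether's theorem as cited in \cite[Theorem~8.2]{OR} is genuinely applicable in the Banach Lie--Poisson setting with a predual replacing the dual, and that the pairing \eqref{pair-urp} with its restricted trace behaves well enough for the conclusion $\frac{\partial}{\partial\tau}\langle J(\mu),\xi\rangle = 0$ for all $\xi\in\uup\oplus\uum$ to upgrade to $\frac{\partial}{\partial\tau}J(\mu) = 0$ as an element of $\uujp\oplus\uujm$ --- but this is precisely the non-degeneracy of the pairing between $\uujp\oplus\uujm$ and $\uup\oplus\uum$, which is standard. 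Thus the corollary follows, and it provides the promised second geometric explanation of Proposition~\ref{prop:diag-const}.
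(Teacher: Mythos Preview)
Your proposal is correct and follows exactly the approach the paper intends: the corollary in the paper carries no separate proof, the reasoning being embedded in the statement itself (invoke the two lemmas, apply Noether's theorem from \cite[Theorem~8.2]{OR}, and read off that $J=p_D$ is conserved). Your elaboration spells this out faithfully, including the passage from the $\tau^n_k$-flows to the $t^n_k$-flows; the only trivial slip is that this is the \emph{first} of the two promised geometric explanations (the second being the symplectic-leaf argument in the following subsection).
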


\begin{rem}
While the Casimirs \eqref{cas} are preserved by the coadjoint action of the central extension $\Urt$ of $\Ur$, the Hamiltonians are not. It is a consequence of the fact that this action maps $\mu$ to a first order polynomial in $\gamma$, see \eqref{coad-ext}.
Hence, it mixes the terms in the series expansion \eqref{W-expansion}.
\end{rem}

\subsection{Magri method and symplectic leaves}
The second observation is related to the symplectic leaves.
It is a consequence of the bi-Hamiltonian structure \eqref{magri-chain} and Magri method that the Hamiltonian flows corresponding to the family of functions $\{h^n_k\}$ defined in \eqref{calki} preserve the symplectic leaves of both Poisson structures appearing in the Poisson pencil.
In our case, it means that both the symplectic leaves of the Lie--Poisson bracket $\pb_0$ and the one given by Schwinger term $\pb_s$ \eqref{pb-s}
are preserved.

Using equation~\eqref{schwinger_expression}, it is easy to see that the kernel of the bilinear mapping $s$ contains the space of block-diagonal operators which are exactly those commuting with $P_+$. Moreover, when restricted to the complement of block-diagonal operators given by off-diagonal operators, the Schwinger term is non-degenerate. In particular, when restricted to the subspace
\[
\mathfrak{m} =
\left\{\left(
\begin{matrix}
 0 & \mu_{+-}\\
 -(\mu_{+-})^* & 0\\
\end{matrix}
\right) \mid \mu_{+-}\in\Ldpm\right\}\subset \urp,
\]
it defines a symplectic bilinear form
\[
s(\mu,\nu)=\Tr(-\mu_{+-}\nu^*_{+-}+\nu_{+-}\mu^*_{+-}) = 2{\rm i} \operatorname{Im} \Tr \nu_{+-}\mu^*_{+-},
\]
which is used to construct the symplectic form on $\Gr$, see, e.g., \cite{Ratiu-grass,wurzbacher}. More precisely, the symplectic form on the Banach space $\mathfrak{m}\cong T_{\H_+}\Gr$ is propagated to the whole tangent bundle by translation with the group $\Ur$ (since $\Gr$ is a homogeneous space with respect to this group).
Since the Poisson bracket $\pb_s$ is constant, it follows that the symplectic leaves for $\pb_s$ are the affine subspaces of $\urp$ defined by 
\[
\left(
\begin{matrix}
 A & 0\\
 0 & D
\end{matrix}
\right) + \mathfrak m \subset \urp \]
for fixed operators $A\in\uujp$ and $D\in\uujm$.
In consequence, we again conclude that diagonal blocks of $\mu$ are constant with respect to the flows given by \eqref{W-eq}. For more discussions about symplectic leaves in Banach Lie--Poisson spaces, see \cite{beltita05}.

\section{Equations on the groupoid of partial isometries}\label{sec:partiso}

In this section, we will restrict our attention to a special case of equations \eqref{H-eq} given by the condition $\mu_{++}=0$, which give rise to the evolution on the set of partial isometries.

Let us recall briefly here that by a partial isometry we mean an operator $u$ acting on the Hilbert space $\H$ such that it restricts to a unitary map between $(\ker u)^\perp$ and $\im u$. The space~$(\ker u)^\perp$ is called the initial space of partial isometry $u$ and will be denoted $s(u)$, while the space $\im u$ is called the final space and will be denoted $t(u)$. Alternatively, partial isometries can be characterized by any of the conditions
$u^*uu^*=u^*$, $uu^*u=u$, $ u^*u$ is a projection (on $ s(u)$), $uu^*$ is a projection (on $t(u)$).
The set of all partial isometries of the Hilbert space has a natural structure of Banach Lie groupoid, see \cite{OS}. In the sequel, we will use also partial isometries between two different Hilbert spaces.

Let us begin by presenting an easier situation --- namely we will restrict our attention to the hierarchy of equations \eqref{H-eq} with $k=1$.

\subsection[Special case of the hierarchy of equations with k=1]{Special case of the hierarchy of equations with $\boldsymbol{k=1}$}

In this situation, the operators \eqref{Hnk} contain only one occurrence of $P_+$, and the hierarchy of equations \eqref{H-eq} assume the explicit form
\[
 \frac\partial{\partial t^n_1} \mu = {\rm i}^{n+1}\bigl[\mu, P_+\mu^n + \mu P_+ \mu^{n-1}+\dots + \mu^n P_+\bigr].\]
After expanding the commutator and canceling out most of the terms we arrive at
\be\label{eqk1} \frac\partial{\partial t^n_1} \mu = -{\rm i}^{n+1}\bigl[P_+,\mu^{n+1}\bigr]\ee
or in the block decomposition
\bse\label{h0}
\begin{empheq}{align}
&\frac\partial{\partial t^n_1} \mu_{++}=0,\label{h0a}\\
&\frac\partial{\partial t^n_1} \mu_{+-}=-{\rm i}^{n+1}\bigl(\mu^{n+1}\bigr)_{+-},\qquad\frac\partial{\partial t^n_1} \mu_{-+}={\rm i}^{n+1}\bigl(\mu^{n+1}\bigr)_{-+},\label{h0c}\\
&\frac\partial{\partial t^n_1} \mu_{--}=0\label{h0d},
\end{empheq}
\ese
where equations \eqref{h0a} and \eqref{h0d} are also consequences of Proposition~\ref{prop:diag-const}.
Following the idea from \cite{G-dr}, we get the following proposition.

\begin{prop}\label{modulus}
In the case $\mu_{++}=0$, the modulus \smash{$\abs{\mu_{-+}}= \sqrt{\mu_{-+}^*\mu_{-+}}$} is constant with respect to times $t^n_1$.
\end{prop}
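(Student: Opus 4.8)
The plan is to compute $\frac{\partial}{\partial t^n_1}(\mu_{-+}^*\mu_{-+})$ directly from the block equations \eqref{h0c} and show it vanishes, which forces $\abs{\mu_{-+}}$ to be constant (the positive square root of a constant positive operator is constant). Set $A := \mu_{-+}$ for brevity; since $\mu_{++} = 0$ and $\mu^* = -\mu$, the operator $\mu$ has block form with zero $++$ block, $\mu_{+-} = -A^*$, and $\mu_{--}$ skew-adjoint. First I would use the Leibniz rule:
\[
\frac{\partial}{\partial t^n_1}\bigl(A^*A\bigr) = \Bigl(\frac{\partial}{\partial t^n_1}A\Bigr)^{\!*} A + A^*\,\frac{\partial}{\partial t^n_1}A.
\]
From \eqref{h0c}, $\frac{\partial}{\partial t^n_1}A = {\rm i}^{n+1}(\mu^{n+1})_{-+}$, and taking adjoints together with $(\mu^{n+1})^* = (-1)^{n+1}\mu^{n+1}$ gives $\frac{\partial}{\partial t^n_1}A^* = {\rm i}^{n+1}(\mu^{n+1})_{+-}$ as well (one checks the sign: $\bigl({\rm i}^{n+1}(\mu^{n+1})_{-+}\bigr)^* = (-{\rm i})^{n+1}(\mu^{n+1})^*_{+-} = (-{\rm i})^{n+1}(-1)^{n+1}(\mu^{n+1})_{+-} = {\rm i}^{n+1}(\mu^{n+1})_{+-}$). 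Hence
\[
\frac{\partial}{\partial t^n_1}\bigl(A^*A\bigr) = {\rm i}^{n+1}\Bigl[(\mu^{n+1})_{+-}\,A + A^*\,(\mu^{n+1})_{-+}\Bigr].
\]

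The key step is then to recognize the bracketed expression as the $--$ block of a commutator that is forced to vanish on the diagonal. Indeed, $\frac{\partial}{\partial t^n_1}(\mu_{--}) = 0$ by \eqref{h0d}, but more usefully, equation \eqref{eqk1} says $\frac{\partial}{\partial t^n_1}\mu = -{\rm i}^{n+1}[P_+,\mu^{n+1}]$, whose $--$ block is zero since a commutator with $P_+$ is purely off-diagonal. Now I would instead observe directly that $A^*A = (\mu^2)_{--}$ — because $(\mu^2)_{--} = \mu_{-+}\mu_{+-} + \mu_{--}^2 = A(-A^*) + \mu_{--}^2$; wait, this is $-AA^*$, not $A^*A$. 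So the cleaner route is: $A^*A = \mu_{+-}^*\mu_{+-}$ has the wrong variance for $\mu^2$; instead note $\mu_{-+}^*\mu_{-+} = \mu_{+-}\mu_{-+} \cdot(-1)$... Let me reorganize: with $\mu_{+-} = -A^*$ we get $(\mu^2)_{++} = \mu_{+-}\mu_{-+} = -A^*A$, so $A^*A = -(\mu^2)_{++}$. Therefore
\[
\frac{\partial}{\partial t^n_1}\bigl(A^*A\bigr) = -\frac{\partial}{\partial t^n_1}(\mu^2)_{++}.
\]

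Finally I would compute $\frac{\partial}{\partial t^n_1}(\mu^2) = \bigl(\frac{\partial}{\partial t^n_1}\mu\bigr)\mu + \mu\bigl(\frac{\partial}{\partial t^n_1}\mu\bigr) = -{\rm i}^{n+1}\bigl([P_+,\mu^{n+1}]\mu + \mu[P_+,\mu^{n+1}]\bigr) = -{\rm i}^{n+1}[P_+,\mu^{n+1}\mu + \mu\,\mu^{n+1}]\cdot\frac12$? No — $[P_+,X]\mu + \mu[P_+,X] \neq [P_+, X\mu + \mu X]$ in general. Rather, using the derivation property of $\ad_{P_+}$ on the product $\mu\cdot\mu$ is circular. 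The honest computation: $\frac{\partial}{\partial t^n_1}(\mu^2) = -{\rm i}^{n+1}\bigl([P_+,\mu^{n+1}]\mu + \mu[P_+,\mu^{n+1}]\bigr)$; taking the $++$ block, $\bigl([P_+,\mu^{n+1}]\mu\bigr)_{++} = (P_+\mu^{n+1} - \mu^{n+1}P_+)_{++}\,\mu|_{\text{contributions}}$, which after expanding in blocks (using $P_+$ block-diagonal) I expect to telescope against the symmetric term and cancel. This block-chasing is the only real computation, and it is the step I expect to be the main obstacle — not because it is deep, but because keeping track of the four blocks of $\mu^{n+1}$ and of $P_+\mu^{n+1}$ versus $\mu^{n+1}P_+$ in the product with $\mu$ is error-prone. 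The conceptually clean way to finish: from \eqref{h0c} one has, for $k=1$, that the evolution of $\mu_{-+}$ is $\frac{\partial}{\partial t^n_1}\mu_{-+} = {\rm i}^{n+1}(\mu^{n+1})_{-+}$; one shows $(\mu^{n+1})_{-+} = \mu_{--}\,(\mu^{n})_{-+} + (\text{terms starting with }\mu_{-+})$ and uses $\mu_{--}^* = -\mu_{--}$ to see that $\frac{\partial}{\partial t^n_1}(\mu_{-+}^*\mu_{-+})$ collapses to a commutator-type expression $[\,\cdot\,,\mu_{-+}^*\mu_{-+}]$ with a skew-adjoint operator, i.e.\ an infinitesimal unitary conjugation, which manifestly preserves $\abs{\mu_{-+}}$. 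I would present the short version: verify $A^*A = -(\mu^2)_{++}$, then show $(\mu^2)_{++}$ is constant by checking that the $++$ block of \eqref{eqk1}-induced evolution of $\mu^2$ vanishes, appealing to Proposition~\ref{prop:diag-const} applied to the ``shifted'' variable or to a direct block check. Once $A^*A$ is constant, $\abs{\mu_{-+}} = \sqrt{A^*A}$ is constant by continuity of the square-root functional calculus on positive operators.
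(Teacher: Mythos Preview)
Your approach is the paper's approach, but you abandon it one step short of the finish line and then wander.  You correctly arrive at
\[
\frac{\partial}{\partial t^n_1}\bigl(A^*A\bigr)
= {\rm i}^{n+1}\bigl[(\mu^{n+1})_{+-}\mu_{-+} - \mu_{+-}(\mu^{n+1})_{-+}\bigr]
\]
(after substituting $A^*=-\mu_{+-}$).  The missing observation is purely algebraic: compute $(\mu^{n+2})_{++}$ as $(\mu^{n+1}\cdot\mu)_{++}$ and as $(\mu\cdot\mu^{n+1})_{++}$.  This gives
\[
(\mu^{n+1})_{++}\mu_{++} + (\mu^{n+1})_{+-}\mu_{-+}
= \mu_{++}(\mu^{n+1})_{++} + \mu_{+-}(\mu^{n+1})_{-+},
\]
so the bracketed expression above equals $\bigl[\mu_{++},(\mu^{n+1})_{++}\bigr]$, which is identically zero when $\mu_{++}=0$.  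That is the entire content of the paper's proof, stated there as the single identity
\[
\frac{\partial}{\partial t^n_1}(\mu_{+-}\mu_{-+}) = {\rm i}^{n+1}\bigl[(\mu^{n+1})_{++},\mu_{++}\bigr].
\]

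Your detour through $(\mu^2)_{++}$ is not wrong---indeed $A^*A = -(\mu^2)_{++}$ under $\mu_{++}=0$---but it is circular: computing $\frac{\partial}{\partial t^n_1}(\mu^2)_{++}$ from \eqref{eqk1} brings you right back to the same expression $(\mu^{n+1})_{+-}\mu_{-+} - \mu_{+-}(\mu^{n+1})_{-+}$, so it does not bypass the step you flagged as the ``main obstacle''.  The appeal to Proposition~\ref{prop:diag-const} ``applied to the shifted variable'' does not make sense as stated, and the ``infinitesimal unitary conjugation'' idea is not carried through.  Go back to your displayed formula for $\frac{\partial}{\partial t^n_1}(A^*A)$ and apply the two-ways-to-compute-$(\mu^{n+2})_{++}$ trick; the proof is then two lines.
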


\begin{proof}
One can straightforwardly compute that
\[ \frac\partial{\partial t^n_1} (\mu_{+-}\mu_{-+}) = {\rm i}^{n+1}\bigl[\bigl(\mu^{n+1}\bigr)_{++},\mu_{++}\bigr].\]
Now if we assume that the block $\mu_{++}=0$ for all $t^n_1$, we see that $\abs{\mu_{-+}}$ is constant.
\end{proof}

\begin{cor}\label{cor_polar}
The hierarchy of equations \eqref{eqk1} induces a family of commuting flows on the set of partial isometries $u$
with fixed initial space $s(u)$,
depending on
a fixed positive operator~${B\in L^1_{++}}$ such that $\overline{\im B} = s(u)$ and a fixed skew adjoint operator $D \in L^1_{--}$.
\end{cor}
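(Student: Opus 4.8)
The plan is to use the polar decomposition to translate the flow on $\mu_{-+}$ (and $\mu_{+-}=-(\mu_{-+})^*$) into a flow on the partial isometry part and observe that only the phase moves. Write $\mu_{-+} = u B^{1/2}$, or more symmetrically use the polar decomposition $\mu_{-+} = u|\mu_{-+}|$ with $|\mu_{-+}| = (\mu_{-+}^*\mu_{-+})^{1/2}$ acting on $\H_+$ and $u\colon\H_+\to\H_-$ a partial isometry with initial space $\overline{\im|\mu_{-+}|} = (\ker\mu_{-+})^\perp$. By Proposition~\ref{modulus}, in the case $\mu_{++}=0$ the operator $|\mu_{-+}|$ is constant along all times $t^n_1$, so the initial space $s(u) = \overline{\im|\mu_{-+}|}$ is frozen. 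First I would set $B := |\mu_{-+}|^2 \in L^1_{++}$ (it is trace-class since $\mu_{-+}\in\Ldpm$, positive, and $\overline{\im B} = \overline{\im|\mu_{-+}|} = s(u)$), and $D := \mu_{--}\in L^1_{--}$, which is constant by \eqref{h0d}. These are exactly the conserved data the corollary advertises.

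Next I would compute the induced equation for $u$. With $\mu_{++}=0$ maintained (consistent by \eqref{h0a}), and $\mu_{--} = D$, $\mu_{+-} = -u^* B^{1/2}$ constant modulus fixed, the odd/even powers $(\mu^{n+1})_{-+}$ can be written in closed form in terms of $u$, $B$, and $D$. Schematically, $\mu = \begin{pmatrix} 0 & -B^{1/2}u^* \\ uB^{1/2} & D\end{pmatrix}$ (after identifying $|\mu_{-+}|$ with $B^{1/2}$), so $\mu^{n+1}$ is a noncommutative polynomial in these blocks; extracting the $-+$ block and substituting into \eqref{h0c} gives $\frac{\partial}{\partial t^n_1}(uB^{1/2}) = {\rm i}^{n+1}(\mu^{n+1})_{-+}$. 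Since $B^{1/2}$ is time-independent, this is an equation for $\frac{\partial u}{\partial t^n_1}$ on the initial space $s(u)$, and one checks it preserves the partial-isometry conditions $u^*u = $ (projection on $s(u)$) because $|\mu_{-+}|$ — hence the projection $uu^* \cdot$, $u^*u$ — is conserved; this is the content already extracted from Proposition~\ref{modulus}. The commutativity of the resulting flows in $u$ is inherited from the commutativity of the flows \eqref{H-eq} on $\urp$ (the flows on $\mu$ commute, and $u\mapsto\mu$, $\mu\mapsto u$ are mutually inverse on the relevant level set once $B$, $D$ are fixed), so no new check of commutativity is needed beyond transporting it through the identification.

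The main obstacle I anticipate is the careful bookkeeping of domains and the well-definedness of $u$ as a \emph{partial isometry between the Hilbert spaces $\H_+$ and $\H_-$} (rather than on $\H$), together with the subtlety that $B^{1/2}$ may fail to be injective, so $u$ is genuinely only a partial isometry and the equation for $u$ is only determined on $s(u) = \overline{\im B}$; one must verify the flow does not try to move $u$ off this subspace, which again comes down to constancy of $B$. A secondary point is checking that all operators stay in the correct Schatten ideals along the flow (e.g. that $(\mu^{n+1})_{-+}\in\Ldpm$, which follows since $\mu_{+-},\mu_{-+}\in L^2$, $D\in L^1\subset L^2$, and the ideal property of the Schatten classes under composition). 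Everything else is a routine translation of \eqref{h0c}–\eqref{h0d} through the polar decomposition, using that the groupoid of partial isometries associated to $\Gr$ has the Banach Lie groupoid structure recalled in Section~\ref{sec:partiso}.
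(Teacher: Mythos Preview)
Your proposal is correct and follows the same route as the paper: polar-decompose $\mu_{-+}$, invoke Proposition~\ref{modulus} together with constancy of the diagonal blocks (Proposition~\ref{prop:diag-const}) to conclude that only the partial-isometry factor evolves while $s(u)=\overline{\im|\mu_{-+}|}$ stays fixed, and inherit commutativity of the induced flows from the commuting flows on $\urp$ via uniqueness of the polar decomposition. The only differences are cosmetic---the paper takes $B:=|\mu_{-+}|$ (so $\mu_{-+}=uB$) rather than your $B:=|\mu_{-+}|^2$---and your second paragraph, deriving the explicit equation for $u$, goes beyond what is needed here and is in fact the content of the paper's next proposition.
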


\begin{proof}

Consider $\mu$ given in the following form:
\be\label{mu-partiso}
\mu = \left(
\begin{matrix}
 0 & -Bu^*\\
 uB & D
\end{matrix}
\right).\ee
In this formula, $\mu_{-+}$ is given by the polar decomposition
\be\label{polar} \mu_{-+} = u B,\ee
where $u\colon\H_+\to\H_-$ is the unique partial isometry satisfying \eqref{polar} with initial space $s(u)=\overline{\im B}$.

By Proposition~\ref{prop:diag-const}, $\mu_{++}=0$ and $\mu_{--}=D$ are constants of motion for all the flows in the hierarchy. By Proposition~\ref{modulus}, if additionally $\mu_{++} = 0$, the modulus $\abs{\mu_{-+}} = B$ is also constant. Thus, the form of the operator $\mu$ is preserved. By uniqueness of the polar decomposition, each flow on $\mu_{-+}$ given in \eqref{h0c} induces a flow on $u$, which fixes the initial space $s(u)$.
\end{proof}

\begin{rem}
Observe that in general the dependence with respect to time of the partial isometry $u$ coming from the polar decomposition needs not to be smooth (or even continuous). It can be demonstrated even in the simple case of a curve $\phi(t)=M t$ for a non-zero matrix $M$. Obviously, at the point $t=0$ the partial isometry from the polar decomposition of $\phi(t)$ is zero, while it is a non-zero operator for other values of $t$.
\end{rem}

\begin{prop}\label{prop-part-iso1}
The equations on the partial isometry $u$ induced by the hierarchy of equations~\eqref{h0c} can be written as follows:
\be \label{eq-part-iso1}\frac\partial{\partial t^n_1} u = {\rm i}^{n+1} (\mu^{n})_{--} u\ee
for $\mu$ given by \eqref{mu-partiso}.
\end{prop}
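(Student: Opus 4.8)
The plan is to differentiate the polar decomposition relation $\mu_{-+} = uB$ with respect to $t^n_1$, use the evolution equation for $\mu_{-+}$ from \eqref{h0c}, and exploit that $B$ is constant (Proposition~\ref{modulus}) together with the specific block structure \eqref{mu-partiso} of $\mu$. First I would compute $\bigl(\mu^{n+1}\bigr)_{-+}$ for $\mu$ of the form \eqref{mu-partiso}. Writing $\mu = \mu_{\rm off} + \mu_D$ where $\mu_{\rm off}$ has only the off-diagonal blocks $-Bu^*$ and $uB$ and $\mu_D = \operatorname{diag}(0,D)$, one observes that $\bigl(\mu^{n+1}\bigr)_{-+} = P_-\mu^{n+1}P_+$. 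The key algebraic fact is that $\mu^{n+1} = \mu^n\cdot\mu$, so $\bigl(\mu^{n+1}\bigr)_{-+} = \bigl(\mu^n\bigr)_{--}\,\mu_{-+} + \bigl(\mu^n\bigr)_{-+}\,\mu_{++} = \bigl(\mu^n\bigr)_{--}\,uB$, using $\mu_{++}=0$. Hence \eqref{h0c} becomes $\frac\partial{\partial t^n_1}(uB) = {\rm i}^{n+1}\bigl(\mu^n\bigr)_{--}\,uB$.

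Next I would use that $B$ is constant in $t^n_1$ (Proposition~\ref{modulus}), so $\frac\partial{\partial t^n_1}(uB) = \bigl(\frac\partial{\partial t^n_1}u\bigr)B$, giving $\bigl(\frac\partial{\partial t^n_1}u\bigr)B = {\rm i}^{n+1}\bigl(\mu^n\bigr)_{--}\,uB$. To cancel $B$ on the right and conclude \eqref{eq-part-iso1}, I would restrict attention to $s(u) = \overline{\im B}$: on this subspace $B$ has dense range, and both sides of the equation, as maps $\H_+\to\H_-$, are determined by their action on $s(u)$ since $u$ (and hence $\frac\partial{\partial t^n_1}u$ along the induced flow) vanishes on $s(u)^\perp = \ker u$. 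Thus the equality $\bigl(\frac\partial{\partial t^n_1}u\bigr)B = {\rm i}^{n+1}\bigl(\mu^n\bigr)_{--}\,uB$ together with boundedness of the operators extends from $\im B$ to its closure $s(u)$, and both sides are zero on $s(u)^\perp$; hence $\frac\partial{\partial t^n_1}u = {\rm i}^{n+1}\bigl(\mu^n\bigr)_{--}\,u$ as claimed.

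The main obstacle I anticipate is the cancellation of $B$, i.e., justifying that an operator identity of the form $XB = YB$ with $X,Y$ bounded and $\overline{\im B}$ a prescribed subspace forces $X = Y$ on that subspace — this needs the observation that the flow on $u$ is exactly the one furnished by Corollary~\ref{cor_polar}, which preserves $\ker u$, so that $X = \frac\partial{\partial t^n_1}u$ and $Y = {\rm i}^{n+1}\bigl(\mu^n\bigr)_{--}u$ both annihilate $\ker u$; combined with density of $\im B$ in $s(u)$ and continuity, this pins down $X = Y$. A secondary subtlety, already flagged in the Remark following Corollary~\ref{cor_polar}, is that $u$ need not depend smoothly on $t^n_1$ at points where $B$ is singular; the statement should be read as holding along the flow at points where the polar part $u$ is differentiable (equivalently, where $s(u)$ is the full prescribed subspace throughout), so I would either add this caveat or simply work on the open dense set of times where differentiability holds. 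The computation of $\bigl(\mu^{n+1}\bigr)_{-+}$ via the block-multiplication identity is routine and I would not belabor it.
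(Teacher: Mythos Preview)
Your argument is correct and follows essentially the same route as the paper: use $\mu_{++}=0$ to get $(\mu^{n+1})_{-+}=(\mu^n)_{--}\mu_{-+}=(\mu^n)_{--}uB$, invoke constancy of $B$ to write $\frac{\partial}{\partial t^n_1}(uB)=\bigl(\frac{\partial}{\partial t^n_1}u\bigr)B$, and then pass from $\im B$ to $\overline{\im B}$ by continuity while noting both sides vanish on $\overline{\im B}^\perp$. The paper's proof is simply a terser version of exactly this; your additional remarks on smoothness of $u$ are reasonable but not needed for the statement as formulated.
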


\begin{proof}
From the condition $\mu_{++} = 0$ it follows that $\bigl(\mu^{n+1}\bigr)_{-+} = (\mu^n)_{--}\mu_{-+}$. Thus, the equation~\eqref{h0c} assumes the form
\[
\frac\partial{\partial t^n_1} uB = {\rm i}^{n+1} (\mu^{n})_{--} uB.
\]
Thus, equation \eqref{eq-part-iso1} is satisfied on $\im B$ and by continuity of $u$, also on \smash{$\overline{\im B}$}. By definition, $u$~is zero on \smash{$\overline{\im B}^\perp$}, what implies the stated result.
\end{proof}

\begin{rem}
 The partial isometry $u$ can be extended trivially to a partial isometry in $\H$. In this way, we obtain a family of differential equations on the Banach Lie groupoid of partial isometries $\UU$ constructed in \cite{OS} generating a flow on
$s^{-1}\bigl((\ker B)^\perp\bigr)\cap t^{-1}(\operatorname{Gr}(\H_-))\subset \UU$, where $\operatorname{Gr}(\H_-)$ is the Grassmannian of all closed subspaces of $\H_-$.
\end{rem}
For values $n=1$ and $n=2$, the flow \eqref{eq-part-iso1} is linear
\begin{gather*} \frac\partial{\partial t^1_1} u = -Du,\qquad
\frac\partial{\partial t^2_1} u = {\rm i}\bigl(uB^2-D^2u\bigr),
\end{gather*}
while for $k=3$ it assumes the following form
\smash{$\frac\partial{\partial t^3_1} u = -DuB^2-uB^2u^*Du+D^3u$},
which taken together with its adjoint can be seen as a pair of coupled Riccati equations on $u$ and $u^*$.

The projector $u^*u$ onto the initial space $s(u)$ is naturally constant by construction, but the projector $uu^*$ onto the final space $t(u)$ satisfies the following equation in Lax form
\[
\frac\partial{\partial t^n_1} (uu^*) = -{\rm i}^{n+1} [uu^*,(\mu^{n})_{--}],
\]
which can be obtained directly from \eqref{eq-part-iso1}.
Note that the right-hand side depends on $u$ and $u^*$, so this equation cannot be viewed as an independent equation on the Grassmannian of~$\H$.

\subsection[The hierarchy of equations in the general case k<= n+1]{The hierarchy of equations in the general case $\boldsymbol{k\leq n+1}$}

Now let us return to the study of the general case for an arbitrary value of $k$. First of all, we will need a formula that comes from applying the recurrence \eqref{H-recc} twice
\be \label{H-recc2}H^n_k(\mu)= P_+\mu P_+ \mu H^{n-2}_{k-2}(\mu) +
\bigl(P_+\mu^2 + \mu P_+ \mu \bigr) H^{n-2}_{k-1}(\mu) + \mu^2 H^{n-2}_{k}(\mu).\ee
Using this formula, we will be in position to prove the following fact.

\begin{prop}
In the case $\mu_{++}=0$, the modulus $\abs{\mu_{-+}}$ is constant along the bi-Hamiltonian flows for all $t^n_k$, $n\in \N$, $k\leq n+1$.
\end{prop}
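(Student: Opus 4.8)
The plan is to show that $\frac{\partial}{\partial t^n_k}\bigl(\mu_{-+}\mu_{-+}^*\bigr)$ vanishes identically when $\mu_{++}=0$, which immediately gives that the modulus $|\mu_{-+}|$ is constant along all the flows. (Equivalently one can work with $\mu_{+-}\mu_{-+}$, as in the proof of Proposition~\ref{modulus}; I will explain the choice below.) Recall that by \eqref{H-eq} the evolution is $\frac{\partial}{\partial t^n_k}\mu = {\rm i}^{n+1}[\mu,H^n_k(\mu)]$, so
\[
\frac{\partial}{\partial t^n_k}(\mu\mu) = {\rm i}^{n+1}\bigl[\mu^2, H^n_k(\mu)\bigr],
\]
and the $(++)$-block of the left-hand side is $\frac{\partial}{\partial t^n_k}\bigl((\mu^2)_{++}\bigr)$. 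Since $\mu_{++}=0$, we have $(\mu^2)_{++} = \mu_{+-}\mu_{-+}$. The point is to compute the $(++)$-block of $\bigl[\mu^2,H^n_k(\mu)\bigr]$ and show it vanishes under the standing hypothesis.

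The key tool is the twice-iterated recurrence \eqref{H-recc2}:
\[
H^n_k(\mu)= P_+\mu P_+ \mu\, H^{n-2}_{k-2}(\mu) +
\bigl(P_+\mu^2 + \mu P_+ \mu\bigr) H^{n-2}_{k-1}(\mu) + \mu^2 H^{n-2}_{k}(\mu),
\]
together with its dual version obtained by applying $*$ (using \eqref{H-recc-dual}). I would set up an induction on $n$. Write $H:=H^n_k(\mu)$ and split it, via \eqref{H-recc2}, into the three displayed summands. For each summand one computes the $(++)$-entry of the commutator $[\mu^2,\cdot\,]$. When $\mu_{++}=0$, the operator $\mu^2$ is block-diagonal, $\mu^2 = (\mu_{+-}\mu_{-+})\oplus(\mu_{-+}\mu_{+-}+\mu_{--}^2)$ — no, more carefully $\mu^2$ has blocks $(\mu^2)_{++}=\mu_{+-}\mu_{-+}$, $(\mu^2)_{--}=\mu_{-+}\mu_{+-}+\mu_{--}^2$, and the off-diagonal blocks $(\mu^2)_{+-}=\mu_{+-}\mu_{--}$, $(\mu^2)_{-+}=\mu_{--}\mu_{-+}$ which need not vanish; so one must keep all four blocks. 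The term $\mu^2 H^{n-2}_k(\mu)$ produces, after taking the $(++)$-block of the commutator, an expression that by the inductive hypothesis (applied to the already-established constancy for lower $n$) combines with the other contributions; the middle term $(P_+\mu^2+\mu P_+\mu)H^{n-2}_{k-1}(\mu)$ and the first term $P_+\mu P_+\mu\, H^{n-2}_{k-2}(\mu)$ contribute via their block structure, and crucially the presence of $P_+$ forces certain blocks. One then checks that all contributions cancel, exactly as the miraculous cancellation in the $k=1$ case (Proposition~\ref{modulus}), where $\frac{\partial}{\partial t^n_1}(\mu_{+-}\mu_{-+}) = {\rm i}^{n+1}[(\mu^{n+1})_{++},\mu_{++}]$ vanishes when $\mu_{++}=0$. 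The general statement should similarly reduce, after using \eqref{H-recc2} and the block decomposition, to a commutator-type expression each term of which carries a factor $\mu_{++}$ and hence vanishes.

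The main obstacle I anticipate is the bookkeeping of which blocks survive when expanding the commutator $[\mu^2, H^n_k(\mu)]$ using the three-term recurrence, and organizing the induction so that the ``lower'' cases that get invoked are genuinely the ones already proved. In particular one must be careful that $\mu^2$ is \emph{not} block-diagonal (its off-diagonal blocks involve $\mu_{--}$), so the naive hope ``$\mu^2$ commutes with diagonal things'' fails; the cancellation is more subtle and uses the explicit form of the polynomials $H^n_k$. A cleaner route, which I would try first, is to mimic the $k=1$ proof directly: compute $\frac{\partial}{\partial t^n_k}\bigl((\mu^2)_{++}\bigr)$ as the $(++)$-block of ${\rm i}^{n+1}[\mu^2,H^n_k(\mu)]$, then use the recurrence \eqref{H-recc} (not doubled, but iterated as needed) to peel off $P_+$ factors and show the result is a sum of terms each proportional to $\mu_{++}$; setting $\mu_{++}=0$ finishes it. Either way, the constancy of $|\mu_{-+}|=\sqrt{\mu_{-+}^*\mu_{-+}}$ follows because $\mu_{-+}^*\mu_{-+}=(\mu^2)_{++}$ when $\mu_{++}=0$ — wait, $\mu_{-+}^*\mu_{-+}=\mu_{+-}^*{}^*\ldots$; precisely, since $\mu^*=-\mu$ we have $\mu_{+-}=-\mu_{-+}^*$, so $(\mu^2)_{++}=\mu_{+-}\mu_{-+}=-\mu_{-+}^*\mu_{-+}=-|\mu_{-+}|^2$, and constancy of $(\mu^2)_{++}$ is exactly constancy of $|\mu_{-+}|^2$, hence of $|\mu_{-+}|$.
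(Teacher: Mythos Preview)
Your reduction is correct and is exactly what the paper does: once $\mu_{++}=0$, the quantity $|\mu_{-+}|^2 = -\mu_{+-}\mu_{-+} = -(\mu^2)_{++}$, and since $\frac{\partial}{\partial t^n_k}\mu^2 = {\rm i}^{n+1}[\mu^2,H^n_k(\mu)]$, the whole problem is to show $P_+[\mu^2,H^n_k(\mu)]P_+=0$. (The paper arrives at the same thing after writing three terms and killing two of them with $P_+\mu P_+=0$.) You also correctly identify the relevant tool, the double recurrence \eqref{H-recc2} together with its adjoint.

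Where your plan breaks is the induction. If you expand $H^n_k$ by \eqref{H-recc2} on \emph{both} occurrences in the commutator, the ``top'' piece gives $[\mu^2,\mu^2 H^{n-2}_k]=\mu^2[\mu^2,H^{n-2}_k]$, and after sandwiching by $P_+$ you get $P_+\mu^2[\mu^2,H^{n-2}_k]P_+$. The inductive hypothesis $P_+[\mu^2,H^{n-2}_k]P_+=0$ says nothing about this, precisely because (as you yourself note) $\mu^2$ is \emph{not} block-diagonal when $\mu_{++}=0$. The trick you are missing, and which the paper uses, is to apply the recurrences \emph{asymmetrically}: use the left recurrence \eqref{H-recc2} for $H^n_k$ inside $H^n_k\,\mu^2$ and the adjoint (right) recurrence for $H^n_k$ inside $\mu^2\,H^n_k$. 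Then both sides contribute the identical term $\mu^2 H^{n-2}_k \mu^2$, which cancels outright---no induction needed. What survives, after sandwiching by $P_+$ and a short computation, is
\[
P_+\mu^2\bigl(H^{n-2}_{k-2}+H^{n-2}_{k-1}\bigr)\mu P_+\mu P_+ \;-\; P_+\mu P_+\mu\bigl(H^{n-2}_{k-2}+H^{n-2}_{k-1}\bigr)\mu^2 P_+,
\]
and each term carries an explicit $P_+\mu P_+$ factor, hence vanishes.
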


\begin{proof}
Just like previously, we compute
the time derivative of $\mu_{-+}^*\mu_{-+}$ but we use the equation in the general form \eqref{H-eq}. After some work, we obtain
\[
 \frac\partial{\partial t^n_k} (\mu_{-+}^*\mu_{-+}) = {\rm i}^{n+1}P_+\bigl(\bigl[\mu^2,H_k^n(\mu)\bigr] +
\mu P_+ [H^n_k(\mu),\mu] +
[H^n_k(\mu),\mu] P_+ \mu\bigr) P_+.
\]
Note that all terms but the first one vanish since $\mu_{++}=P_+\mu P_+=0$. Now we need to prove that the first term vanishes as well. To this end, let us apply formula \eqref{H-recc2} to one term of the commutator and the adjoint formula
to the other. We assume the convention that $H_k^n(\mu)=0$ for $k>n+1$, so we can avoid worrying about the range of indices $k$ and $n$. In this way, we obtain after some cancellations
\begin{align*}
 P_+\bigl[\mu^2,H_k^n(\mu)\bigr]P_+ ={}&
P_+ \mu^2 \bigl(H^{n-2}_{k-2}(\mu) + H^{n-2}_{k-1}(\mu)\bigr) \mu P_+ \mu P_+\\
& -
P_+ \mu P_+ \mu \bigl(H^{n-2}_{k-2}(\mu) + H^{n-2}_{k-1}(\mu)\bigr) \mu^2 P_+ .
\end{align*}
Again, we conclude that this expression vanishes if $\mu_{++}=0$.
\end{proof}

As in Corollary \ref{cor_polar}, when using the polar decomposition of $\mu_{-+}=uB$, the hierarchy of equations \eqref{H-eq} gives rise to a family of commuting equations on the partial isometry $u$, which can be written down explicitly.

\begin{prop}
Assume that $\mu_{++} = 0$.
The equations for the evolution of the partial isometry~$u$ assume the form
\be \label{eq-partiso} \frac\partial{\partial t^n_k} u = {\rm i}^{n+1}\bigl(\mu H^{n-1}_{k-1}(\mu)\bigr)_{--}u\ee
for $n\in\N$, $k\leq n+1$.
\end{prop}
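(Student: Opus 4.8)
The plan is to mimic the derivation of Proposition~\ref{prop-part-iso1}, working out the $-+$ block of equation~\eqref{H-eq} under the hypothesis $\mu_{++}=0$ and then extracting the induced equation on the partial isometry $u$ via the polar decomposition $\mu_{-+}=uB$. First I would take the $-+$ block of \eqref{H-eq}, namely
\[
\frac{\partial}{\partial t^n_k}\mu_{-+}={\rm i}^{n+1}\bigl([\mu,H^n_k(\mu)]\bigr)_{-+}
={\rm i}^{n+1}\bigl(\mu H^n_k(\mu)-H^n_k(\mu)\mu\bigr)_{-+}.
\]
The key algebraic step is to rewrite $\bigl(H^n_k(\mu)\bigr)_{-+}$ and $\bigl(\mu H^n_k(\mu)\bigr)_{-+}$ using the recurrence \eqref{H-recc}. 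Writing $H^n_k(\mu)=P_+\mu H^{n-1}_{k-1}(\mu)+\mu H^{n-1}_k(\mu)$, the first summand begins with $P_+$, so its $-+$ block vanishes; hence $\bigl(H^n_k(\mu)\bigr)_{-+}=\bigl(\mu H^{n-1}_k(\mu)\bigr)_{-+}$, and more relevantly $P_-H^n_k(\mu)P_+=P_-\mu H^{n-1}_k(\mu)P_+$. The point of the hypothesis $\mu_{++}=0$ is that it forces many of these blocks to collapse: since $P_+\mu P_+=0$, one has $P_+\mu=P_+\mu P_-$ and $\mu P_+=P_-\mu P_+$, and a term like $P_-\mu H^{n-1}_{k-1}(\mu)\mu P_+$ — which is the kind of term that will appear from $H^n_k(\mu)\mu$ after applying \eqref{H-recc-dual} — contains an internal $P_+\mu P_+=0$ only if the structure of $H^{n-1}_{k-1}$ places a $P_+$ there, so one must be slightly careful and instead track which terms survive directly.

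The cleanest route, I expect, is the one used in Proposition~\ref{prop-part-iso1}: use the hypothesis $\mu_{++}=0$ to telescope the power $\mu^{n+1}$-type structure. Concretely, from $\mu_{++}=0$ we get, for any operator $X$, the identity $P_-\,\mu X\mu\,P_+ = P_-\mu P_-\,X\,P_-\mu P_+$ is \emph{not} automatic, so instead I would argue as follows: apply \eqref{H-recc-dual}, $H^n_k(\mu)=H^{n-1}_{k-1}(\mu)\mu P_+ + H^{n-1}_k(\mu)\mu$, to the term $H^n_k(\mu)\mu$ in the commutator. Then $H^{n-1}_{k-1}(\mu)\mu P_+\mu$ contains $P_+\mu$ at the end; since $\mu_{++}=0$ this equals $P_+\mu P_-$, so $(H^{n-1}_{k-1}(\mu)\mu P_+\mu)_{-+}=0$ because it ends with a $P_-$-projection reading into $\H_+$ — wait, more carefully: $P_+\mu$ maps $\H_+\to\H_-$, so $H^{n-1}_{k-1}(\mu)\mu P_+\mu\,P_+=H^{n-1}_{k-1}(\mu)\mu\,(P_+\mu P_+)=0$. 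Hence $\bigl(H^n_k(\mu)\mu\bigr)_{-+}=\bigl(H^{n-1}_k(\mu)\mu^2\bigr)_{-+}$, and iterating or combining with the parallel simplification of $\mu H^n_k(\mu)$ via \eqref{H-recc}, the commutator $[\mu,H^n_k(\mu)]_{-+}$ collapses to $\bigl(\mu H^{n-1}_{k-1}(\mu)\mu\bigr)_{-+}$ minus a vanishing piece, so that
\[
\frac{\partial}{\partial t^n_k}\mu_{-+}
={\rm i}^{n+1}\bigl(\mu H^{n-1}_{k-1}(\mu)\bigr)_{--}\,\mu_{-+}.
\]
I would verify this last identity by the block bookkeeping $\bigl(\mu H^{n-1}_{k-1}(\mu)\mu\bigr)_{-+}=P_-\mu H^{n-1}_{k-1}(\mu)\bigl(P_++P_-\bigr)\mu P_+$ and noting $P_+\mu P_+=0$ kills the $P_+$-middle term, leaving $P_-\mu H^{n-1}_{k-1}(\mu)P_-\mu P_+=\bigl(\mu H^{n-1}_{k-1}(\mu)\bigr)_{--}\mu_{-+}$.

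Finally, substituting the polar decomposition $\mu_{-+}=uB$ (as in Corollary~\ref{cor_polar}, where $B=\abs{\mu_{-+}}$ is constant by the previous proposition), the equation reads $\frac{\partial}{\partial t^n_k}(uB)={\rm i}^{n+1}\bigl(\mu H^{n-1}_{k-1}(\mu)\bigr)_{--}uB$; since $B$ is time-independent, $\bigl(\tfrac{\partial u}{\partial t^n_k}\bigr)B={\rm i}^{n+1}\bigl(\mu H^{n-1}_{k-1}(\mu)\bigr)_{--}uB$, so equation~\eqref{eq-partiso} holds on $\im B$, extends to $\overline{\im B}=s(u)$ by continuity, and holds on $s(u)^\perp$ because both sides vanish there (the right side because $u$ vanishes on $s(u)^\perp$). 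This last paragraph is routine given the earlier results; the only real obstacle is the middle step — getting the $-+$ block of the commutator to collapse to exactly $\bigl(\mu H^{n-1}_{k-1}(\mu)\bigr)_{--}\mu_{-+}$. I expect that, depending on which of \eqref{H-recc}/\eqref{H-recc-dual} one applies to which factor, several apparently different intermediate expressions arise, and the bookkeeping of which $P_+\mu P_+$-sandwiches vanish under $\mu_{++}=0$ is where care is needed; the recurrence \eqref{H-recc2} (double application) may be the most convenient tool to make all the cancellations transparent in one shot, exactly as it was in the proof of the preceding proposition.
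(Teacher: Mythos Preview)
Your overall strategy is exactly the paper's: take the $-+$ block of \eqref{H-eq}, expand the commutator with the recurrences \eqref{H-recc} and \eqref{H-recc-dual}, use $\mu_{++}=0$ to kill terms, arrive at $\bigl(\mu H^{n-1}_{k-1}(\mu)\bigr)_{--}\mu_{-+}$, and finish with the polar-decomposition argument (which you carry out correctly). The one concrete slip is that you apply the recurrences to the \emph{wrong} factors. You use \eqref{H-recc-dual} on $H^n_k(\mu)\mu$ and \eqref{H-recc} on $\mu H^n_k(\mu)$; with that choice the difference becomes
\[
\bigl(\mu P_+\mu\, H^{n-1}_{k-1}(\mu)\bigr)_{-+}+\bigl([\mu^2,H^{n-1}_k(\mu)]\bigr)_{-+},
\]
and the second term does \emph{not} vanish or telescope without further work, so your ``collapses to $(\mu H^{n-1}_{k-1}\mu)_{-+}$'' step is not justified as written.

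The paper swaps the roles: apply \eqref{H-recc-dual} to the \emph{first} factor $\mu H^n_k(\mu)$ and \eqref{H-recc} to the \emph{second} factor $H^n_k(\mu)\mu$. Then both produce a common piece $\mu H^{n-1}_k(\mu)\mu$ which cancels in the commutator; the remaining piece from the second factor, $P_+\mu H^{n-1}_{k-1}(\mu)\mu$, dies under $P_-(\cdot)$ since $P_-P_+=0$; and only $\bigl(\mu H^{n-1}_{k-1}(\mu)\mu\bigr)_{-+}$ survives. Your subsequent block computation showing this equals $\bigl(\mu H^{n-1}_{k-1}(\mu)\bigr)_{--}\mu_{-+}$ (via $P_+\mu P_+=0$) is then exactly right. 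There is no need for \eqref{H-recc2} here.
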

\begin{proof}
Directly from equation \eqref{H-eq} using the condition $P_+ \mu P_+ = 0$, we get
\[
 \frac\partial{\partial t^n_k} uB = {\rm i}^{n+1}P_-[\mu, H^n_k(\mu)]P_+ =
{\rm i}^{n+1}(P_-\mu H^n_k(\mu) P_+ - P_- H^n_k(\mu) P_- \mu P_+).\]
Now, we use the rule \eqref{H-recc-dual} in the first term and the rule \eqref{H-recc} in the second and obtain
\[
 \frac\partial{\partial t^n_k} uB = {\rm i}^{n+1}P_-\bigl(\mu H^{n-1}_{k-1}(\mu) + \mu H^{n-1}_{k}(\mu)
-P_+\mu H^{n-1}_{k-1}(\mu) - \mu H^{n-1}_k(\mu)\bigr)uB.\]
After simplifying the expression, we obtain expression \eqref{eq-partiso} multiplied on the right by the operator $B$. The same argument as in the proof of Proposition \ref{prop-part-iso1} gives the stated result.
\end{proof}

\begin{prop}\label{rem:limitk}
Since $\mu_{++}=0$, the right-hand side of equation \eqref{eq-partiso} vanishes for $k> n/2+1$.
\end{prop}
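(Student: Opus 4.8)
The plan is to show that the operator $\mu H^{n-1}_{k-1}(\mu)$ appearing on the right-hand side of \eqref{eq-partiso} has vanishing $(--)$-block whenever $k > n/2+1$, using only the structural fact that $\mu_{++}=P_+\mu P_+=0$. The key observation is a \emph{bookkeeping of $P_+$-factors}: by definition \eqref{Hnk}, every monomial in $H^{n-1}_{k-1}(\mu)$ is an alternating product of $n-1$ copies of $\mu$ and $k-1$ copies of $P_+$, with the $k-1$ projectors distributed among the $n$ slots (before, between, and after the $\mu$'s). Multiplying on the left by $\mu$ and then sandwiching between $P_-$ on both sides (to extract the $(--)$-block) produces monomials of the form $P_- \mu P_+^{i_0}\mu P_+^{i_1}\cdots \mu P_+^{i_{n-1}} P_-$ with $i_0+\dots+i_{n-1}=k-1$, where now there are $n$ copies of $\mu$ in total.

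First I would record the elementary fact that $P_+\mu P_+ = 0$ forces every nonzero monomial of this type to have no two \emph{consecutive} $\mu$'s both flanked by $P_+$ on the inside; more precisely, reading the product from left to right, a factor $\mu$ can only be immediately preceded by $P_+$ or followed by $P_+$ if it is \emph{not} sandwiched $P_+\mu P_+$. The cleanest way to phrase this: insert the identity $1 = P_+ + P_-$ in every gap that does not already carry a $P_+$, expand, and discard every resulting word containing the pattern $P_+\mu P_+$. What survives is a sum of words in which the $n$ copies of $\mu$ are each flanked by a definite pair of projectors, and the pattern $P_+\,\mu\,P_+$ never occurs. Since both ends are pinned to $P_-$, each occurrence of $P_+$ in the interior must be ``charged'' to an adjacent $\mu$ that is otherwise flanked by $P_-$; a short counting argument then shows that $n$ copies of $\mu$ can absorb at most $\lfloor n/2 \rfloor$ interior projectors $P_+$ without creating a forbidden $P_+\mu P_+$ — intuitively, the $P_+$'s must be separated by at least one $\mu$ each and cannot be adjacent to the $P_-$-pinned ends in pairs. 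Hence if $k-1 > \lfloor n/2\rfloor$, equivalently $k > n/2+1$, every surviving word vanishes, so $\bigl(\mu H^{n-1}_{k-1}(\mu)\bigr)_{--}=0$.

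I would carry out the steps in this order: (1) write $\bigl(\mu H^{n-1}_{k-1}(\mu)\bigr)_{--}$ explicitly as a sum over index tuples using \eqref{Hnk}; (2) state and prove the combinatorial lemma that a word $P_-\mu P_+^{i_0}\mu\cdots\mu P_+^{i_{n-1}}P_-$ with $\sum i_j = k-1$ and no block equal to $2$ in the ``interior'' can only be nonzero if $k-1 \le \lfloor n/2\rfloor$ — this is where one reduces $i_j \in\{0,1\}$ already holds by definition, so the condition is really ``no two adjacent $i_j$'s equal to $1$ with the wrong parity relative to the pinned ends''; actually the precise constraint is that each $i_j=1$ forces the $\mu$'s on either side to carry $P_+$ on that side, and chaining these forces a fixed-point-free matching; (3) conclude by the pigeonhole bound. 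The main obstacle I anticipate is getting the combinatorial lemma stated with exactly the right hypothesis: the naive bound ``$k-1$ projectors among $n$ slots'' gives $k\le n+2$, which is far too weak, so the real content is that the $P_+\mu P_+ = 0$ relation, combined with the $P_-$-pinning at both ends, roughly halves the number of admissible projector placements. One should double-check the boundary cases $k = \lceil n/2\rceil +1$ to make sure the threshold is $n/2+1$ and not off by one; the cleanest check is to verify directly for small $n$ (say $n=2,3,4$) that the first vanishing value of $k$ is indeed $\lceil n/2\rceil + 1$, matching $k>n/2+1$.
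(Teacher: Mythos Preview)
Your approach is essentially the paper's --- both rest on the observation that $P_+\mu P_+=0$ forbids two adjacent $P_+$'s in any surviving monomial, followed by a pigeonhole count on the slot indices. However, you overcomplicate the execution. There is no need to insert $1=P_++P_-$ in every gap: the original monomials already display the relevant pattern. The paper simply notes that (i)~any term of $H^{n-1}_{k-1}(\mu)$ with $i_{n-1}=1$ ends in $P_+$, hence contributes zero to the $(--)$-block since $P_+P_-=0$; (ii)~for the remaining terms ($i_{n-1}=0$), the $k-1$ projectors sit in the $n-1$ slots $i_0,\dots,i_{n-2}$, and if $k-1>\lceil(n-1)/2\rceil=\lfloor n/2\rfloor$ then two consecutive indices $i_j,i_{j+1}$ must both equal~$1$, producing the forbidden factor $P_+\mu P_+=\mu_{++}=0$. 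This two-line argument replaces the ``combinatorial lemma'' whose precise formulation you yourself flag as the main obstacle; the identity-insertion and the talk of ``fixed-point-free matchings'' are unnecessary detours.
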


\begin{proof}
Recall from formula \eqref{Hnk} that $H^{n-1}_{k-1}$ is a sum of expressions of the form
$ P_+^{i_0}\mu P_+^{i_1}\mu\cdots \allowbreak\smash{\mu P_+^{i_{n-1}}}$
with $i_0, i_1, \dots, i_{n-1}\in\{0,1\}$ and $i_0+i_1+\dots+i_{n-1} = k-1$.
The expressions ending with a $P_+$ will not contribute in formula \eqref{eq-partiso}, since
$
\bigl(\mu P_+^{i_0}\mu P_+^{i_1}\mu\cdots \mu P_+\bigr)_{--} = 0$,
because $P_+ P_- = 0$.
Moreover, if $k> n/2+1$ all expressions of the form
\smash{$P_+^{i_0}\mu P_+^{i_1}\mu\cdots \mu P_+^{i_{n-2}}\mu$}
vanish, since necessarily at least two subsequent indices $i_j$, $i_{j+1}$ need to be non-zero and \smash{$P_+^{i_j}\mu P_+^{i_{j+1}} = \mu_{++} = 0$},
see also~\cite[Proposition~2]{GT-partiso-bial}.
\end{proof}

\subsection{Example: rank one case}\label{ex:rank1}

Following the ideas from the finite-dimensional case described in \cite{GT-partiso-bial}, we will now present an example of solutions in the case when the rank of the partial isometry $u$ is equal to 1. The approach we use is a generalization of the methods presented in \cite[Section 4]{GT-partiso-bial}. From the properties of polar decomposition (see Corollary \ref{cor_polar}), it follows that also $\dim\im B=1$.
Since~$B$ and ${\rm i}D$ are self adjoint trace-class and thus compact, one can choose an orthonormal basis~$\set{e_i}$ in $\H_+$ and $\set{f_i}$ in $\H_-$ in which they are diagonal.
We can further choose the basis in such a~way that $\im B$ is spanned by the first basis vector $e_1$, so the operator $B$ can be written as~${ B = b\proj{e_1}{e_1}}$
for some $\R\ni b>0$. In consequence, the partial isometry $u$ becomes an operator of the form
$u = \proj\psi{e_1}$
for some
$\psi = \sum_{j=1}^\infty \alpha_j f_j\in\H_-$
with norm 1, see also formula (28) in~\cite{GT-partiso-bial} for matrix expression of $u$ in finite dimensions.

In consequence, we obtain a hierarchy of equations on the single vector $\psi$ on the unit sphere of $\H_-$
\be \label{eq-partiso2} \frac\partial{\partial t^n_k} \psi = {\rm i}^{n+1}\bigl(\mu H^{n-1}_{k-1}(\mu)\bigr)_{--}\psi\ee
depending on the single non-zero eigenvalue $b$ of $B$ and (possibly infinitely many) eigenvalues~${d_1, d_2,\ldots\in {\rm i}\R}$ of $D$

\begin{prop}\label{prop_rank1}
The equations for the evolution of the coefficients $\alpha_1,\alpha_2,\dots$ of the vector $\psi$ with respect to the arbitrary time $t^n_k$ are of the form
\begin{gather*}
 \frac\partial{\partial t^n_k} \alpha_j = {\rm i} p_{j,k}^n\bigl(\abs{\alpha_1}^2,\abs{\alpha_2}^2,\dots\bigr) \alpha_j,
\end{gather*}
where $p_{j,k}^n$ are smooth real-valued functions depending on the eigenvalues of the matrices $B$ and~$D$. Moreover, for fixed $n$, $k$, the absolute values of $p_{j,k}^n$ are bounded with respect to $j$ by some sequence of positive constants $\{w^n_k\}_{n\geq 0, k\geq 0}$:
\[\bigl|p_{j,k}^n\bigl(\abs{\alpha_1}^2,\abs{\alpha_2}^2,\dots\bigr)\bigr|\leq w^n_k.\]
\end{prop}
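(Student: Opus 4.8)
The plan is to compute explicitly the right-hand side of equation~\eqref{eq-partiso2} when $u = \proj{\psi}{e_1}$ and to extract the coefficient of $\alpha_j$. First I would write $\mu$ in the rank-one situation: with $B = b\proj{e_1}{e_1}$ and $u = \proj{\psi}{e_1}$ we have $\mu_{-+} = uB = b\proj{\psi}{e_1}$, $\mu_{+-} = -Bu^* = -b\proj{e_1}{\psi}$, $\mu_{++}=0$ and $\mu_{--}=D = \sum_j d_j \proj{f_j}{f_j}$. The key structural observation is that in the basis $\{e_1\}\cup\{f_j\}$ (the relevant part of $\H$), the operator $\mu$ has a very sparse form: it couples $e_1$ only to the vector $\psi\in\H_-$, and $D$ is diagonal on the $f_j$. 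Consequently any power $\mu^m$, and hence any of the polynomials $H^{n-1}_{k-1}(\mu)$ (which are sums of products of copies of $\mu$ and $P_+$), is built from a small alphabet of operators: $\proj{e_1}{e_1}$, $\proj{\psi}{e_1}$, $\proj{e_1}{\psi}$, and $D$ together with $\proj{\psi}{\psi}$. I would verify that the $(--)$-block $\bigl(\mu H^{n-1}_{k-1}(\mu)\bigr)_{--}$ is therefore a word in the two operators $D$ and the rank-one projector $\proj{\psi}{\psi}$, with coefficients that are polynomials in $b$ and in $\norm{\psi}^2 = \sum_j|\alpha_j|^2 = 1$ — and, crucially, that whenever a factor $\proj{e_1}{e_1}$ would appear between two $\mu$'s it forces $\mu_{++}=0$, so only certain patterns survive.

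Next I would apply this operator to $\psi$. Since $D\psi = \sum_j d_j\alpha_j f_j$ and $\proj{\psi}{\psi}\psi = \psi$, a word in $D$ and $\proj{\psi}{\psi}$ applied to $\psi$ produces a vector whose $f_j$-component is $\alpha_j$ times a polynomial in $d_j$ and in the numbers $\pair{\psi}{D^r\psi} = \sum_i d_i^r |\alpha_i|^2$ for various $r$, and in $b$. Collecting the $f_j$-coefficient gives exactly $\frac\partial{\partial t^n_k}\alpha_j = \bigl(\text{polynomial in }d_j, b, \text{ and the moments }\sum_i d_i^r|\alpha_i|^2\bigr)\,\alpha_j$. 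That the overall prefactor is of the form ${\rm i}\,p^n_{j,k}$ with $p^n_{j,k}$ real follows from skew-adjointness: $\mu\in\urp$ is skew-adjoint and the flow \eqref{eq-partiso} preserves $\norm{\psi}=1$ (this is the content of the preceding propositions on constancy of $\abs{\mu_{-+}}$ and $u^*u$), so $\operatorname{Re}\pair{\psi}{\dot\psi}=0$; combined with the fact that each coefficient is separately of the form (real moment data)$\times\alpha_j$, one concludes $p^n_{j,k}$ is real-valued. The moments $\sum_i d_i^r|\alpha_i|^2$ are smooth functions of the $|\alpha_i|^2$, which gives smoothness of $p^n_{j,k}$.

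For the boundedness claim I would argue as follows. The eigenvalues $d_i$ of $D$ form a sequence with $\sum_i |d_i| < \infty$ since ${\rm i}D\in L^1_{--}$ is trace-class; in particular $\sup_i |d_i| =: \|D\| < \infty$ and $\sum_i |d_i|^r \le \|D\|^{r-1}\sum_i|d_i| < \infty$ for $r\ge 1$. Hence every moment $\bigl|\sum_i d_i^r|\alpha_i|^2\bigr| \le \|D\|^r$ using $\sum_i |\alpha_i|^2 = 1$, and $|d_j|\le\|D\|$ and $b$ is a fixed positive number. Since $p^n_{j,k}$ is a fixed polynomial (depending only on $n,k$, not on $j$) in the finitely many quantities $d_j$, $b$ and a finite list of moments (the degree of $H^{n-1}_{k-1}$ bounds how many factors of $\mu$, hence how high the relevant powers go), substituting these uniform bounds yields a constant $w^n_k$ depending only on $n$, $k$, $b$ and $\|D\|$, with $|p^n_{j,k}(|\alpha_1|^2,|\alpha_2|^2,\dots)|\le w^n_k$ for all $j$ and all points on the unit sphere.

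The main obstacle I anticipate is the bookkeeping in the first step: showing cleanly that $\bigl(\mu H^{n-1}_{k-1}(\mu)\bigr)_{--}$ really does reduce to a word in $D$ and $\proj{\psi}{\psi}$ with the stated polynomial structure, and in particular tracking which monomials in the expansion \eqref{Hnk} survive the conditions $\mu_{++}=0$ and $P_+P_-=0$ (this is where Proposition~\ref{rem:limitk} is relevant — it already tells us only $k\le n/2+1$ contribute). Once that normal form is established, extracting the $\alpha_j$-coefficient and bounding it are routine. A clean way to organize the normal-form computation is to note that $\mu$ restricted to $\operatorname{span}\{e_1\}\oplus\overline{\operatorname{span}}\{f_i\}$ is itself essentially rank-structured: $\mu = b(\proj{\psi}{e_1} - \proj{e_1}{\psi}) + D$, so $\mu^2 = -b^2\proj{e_1}{e_1} + b^2|\alpha_\bullet|^2\text{-type terms} + \dots$; pushing this through $H^{n-1}_{k-1}$ and then projecting onto the $(--)$-block and acting on $\psi$ keeps everything within the two-operator algebra generated by $D$ and $\proj{\psi}{\psi}$.
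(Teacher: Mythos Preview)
Your overall strategy coincides with the paper's: expand $(\mu H^{n-1}_{k-1}(\mu))_{--}$ in the rank-one setting, observe that its action on $\psi$ has $f_j$-component equal to $\alpha_j$ times an expression built from $d_j$, $b$, and the moments $\langle\psi|D^r\psi\rangle=\sum_i d_i^r|\alpha_i|^2$, and then bound. The paper organizes the reduction by grouping factors as $u^*D^su B^r = b^r\langle\psi|D^s\psi\rangle\,|e_1\rangle\langle e_1|$, which collapses all inner factors to scalars and leaves a single power $D^{l}$ on the left; this yields directly the form $\sum_i b^{k_i}\bigl(\prod_r\langle\psi|D^{s_{ir}}\psi\rangle\bigr)D^{l_i}\psi$ and bypasses your two-generator algebra in $D$ and $|\psi\rangle\langle\psi|$. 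Your boundedness argument is essentially the paper's.

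There is, however, a genuine gap in your argument for the reality of $p^n_{j,k}$. Norm preservation yields only the single scalar identity $\sum_j\operatorname{Re}(c_j)\,|\alpha_j|^2 = 0$ (writing $\dot\alpha_j=c_j\alpha_j$); since each $c_j$ itself depends on the $|\alpha_i|^2$, this does not force $\operatorname{Re}(c_j)=0$ term by term. Your phrase ``real moment data'' is also incorrect: $D$ is skew-adjoint, so $d_i\in i\mathbb{R}$ and hence $\langle\psi|D^r\psi\rangle\in i^r\mathbb{R}$, which is purely imaginary for odd $r$. The paper instead invokes skew-hermiticity of the operator $i^{n+1}(\mu H^{n-1}_{k-1}(\mu))_{--}$ appearing in front of $u$ in \eqref{eq-partiso}. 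To turn this into the desired conclusion one can run a parity count on the reduced expression: each surviving contribution to $c_j$ has the shape $\pm\, i^{n+1}b^{2m}\bigl(\prod_{r\ge1}\langle\psi|D^{a_r}\psi\rangle\bigr)d_j^{a_0}$ with $a_0+a_1+\cdots+a_m=n-2m$, so it lies in $i^{n+1}\cdot i^{a_0+\cdots+a_m}\,\mathbb{R}=i^{n+1}\cdot i^{n-2m}\,\mathbb{R}=i\,\mathbb{R}$. Either carry out this parity argument or establish skew-hermiticity directly from $\mu^*=-\mu$ together with the reversal symmetry $H^{n-1}_{k-1}(\mu)^*=(-1)^{n-1}H^{n-1}_{k-1}(\mu)$ and the identity $(\mu H^{n-1}_{k-1})_{--}=(H^{n-1}_{k-1}\mu)_{--}$ coming from \eqref{H-recc}--\eqref{H-recc-dual}; the argument you sketch does neither.
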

\begin{proof}
We follow the steps of the proof of \cite[Proposition 3]{GT-partiso-bial}.
First, we note that $Bu^*u = B$.
We see that the right-hand side of \eqref{eq-partiso2}
is a sum of terms of the form
\begin{gather}
{\rm i}^{n+1} \bigl(D^{i_1} u B^{j_1} u^*\bigr)\cdot \bigl(D^{i_2} u B^{j_2} u^*\bigr) \cdots \bigl(D^{i_l} u B^{j_l} u^*\bigr) u\nonumber
\\
\qquad= {\rm i}^{n+1} D^{i_1} u B^{j_1} \bigl(u^* D^{i_2} u B^{j_2}\bigr) \cdot \bigl(u^* D^{i_3} u B^{j_3}\bigr)\cdots \bigl(u^*D^{i_l} u B^{j_l}\bigr) u^* u\label{gen_term}
\end{gather}
for some $i_1,\dots, i_l, j_1,\dots, j_l\in \{0,1,\dots\}$.
In the second line of this formula, we moved the parentheses in order to get products of terms of the form $u^*D^s u B^r$. Then we observe that
\[
 u^*D^s u B^r = b^r \sc{\psi}{D^s\psi} \proj{e_1}{e_1} = b^r \sum_{i=1}^\infty d_i^s \abs{\alpha_i}^2 \proj{e_1}{e_1} .\]
Note that $d^s_i$ are bounded by $\norm{D^s}$ and the infinite sum above converges in the norm topology. In the end plugging this equality into \eqref{gen_term} and summing up, we get the equation of the form
\be\label{eq_psi}
 \frac\partial{\partial t^n_k} \psi = {\rm i}^{n+1} \sum_i b^{k_i} \Biggl(\prod_{j=0}^{r_i} \sc{\psi}{D^{s_{ij}}\psi}\Biggr) D^{l_i}\psi
\ee
for some $k_i,l_i,r_i,s_{ij}\in\N$. The expression on the right-hand side is a diagonal operator, whose eigenvalues are denoted by $p_{j,k}^n$, acting on the vector $\psi$.
Splitting this equation into components yields the thesis.
The functions $p_{j,k}^n$ are essentially algebraic combinations of expressions~$\sc{\psi}{D^s\psi}$, which smoothly depend on \smash{$\abs{\alpha_1}^2, \abs{\alpha_2}^2, \dots$}. Moreover, for fixed $n$ and $k$ they are bounded by the norm of the operator on the right-hand side of \eqref{eq_psi}. This norm in general depends on the values of $\abs{\alpha_1}^2, \abs{\alpha_2}^2, \dots$, but by means of triangle inequality and the fact that~${\abs{\alpha_j}\leq 1}$, we get the required bound.

It remains to show that the functions $p_{j,k}^n$ are real. It follows from the fact that the matrix in front of the matrix $u$ on the right-hand side of equation \eqref{eq-partiso} is skew hermitian.
\end{proof}

\begin{thm}\label{thm-rank1}
The solution to \eqref{eq-partiso2} for the case of partial isometries of rank one is the following:
\[\psi\bigl(t_1^1,t_1^2,t_2^2,\dots\bigr) = \sum_{j=1}^\infty \alpha_j\bigl(t_1^1,t_1^2,t_2^2,\dots\bigr) f_j\]
for
\begin{equation}\label{solution}
\alpha_j\bigl(t_1^1,t_1^2,t_2^2,\dots\bigr) = \alpha_j^0 \exp\Biggl({{\rm i} \sum_{n,k\leq n/2+1} p_{j,k}^n\bigl(\abs{\alpha_1^0}^2,\abs{\alpha_2^0}^2,\dots\bigr)t^n_k}\Biggr),
\end{equation}
where $\alpha_j^0\in\C$ is the initial value of the $j$-th component of vector $\psi$. In order to ensure convergence, we assume that the sequence of times $t^n_k$ satisfies the condition
\be \label{times_cond} \sum_{n,k\leq n/2+1} w^n_k \abs{t^n_k}^2 < \infty. \ee
\end{thm}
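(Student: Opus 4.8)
The plan is to verify that the proposed ansatz \eqref{solution} actually solves the hierarchy \eqref{eq-partiso2}, and then separately to check convergence of the infinite series defining $\psi$ and the mutual compatibility (commutativity) of the flows. The key observation from Proposition \ref{prop_rank1} is that each equation in the hierarchy acts \emph{diagonally} on the coefficients: $\frac{\partial}{\partial t^n_k}\alpha_j = {\rm i}\,p^n_{j,k}(|\alpha_1|^2,|\alpha_2|^2,\dots)\,\alpha_j$, and that the arguments $|\alpha_i|^2$ of $p^n_{j,k}$ are themselves constants of motion. This last point is the heart of the matter: since $p^n_{j,k}$ is real-valued, $\frac{\partial}{\partial t^n_k}|\alpha_i|^2 = 2\operatorname{Re}\bigl(\overline{\alpha_i}\,{\rm i}\,p^n_{i,k}|\alpha_i|^2\cdots\bigr) = 0$, so each $|\alpha_i|^2$ is frozen along every flow. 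Equivalently, this is just the statement that $\abs{\mu_{-+}} = B$ is constant (Corollary \ref{cor_polar}), specialized to rank one where $B = b\proj{e_1}{e_1}$ and $\norm{\psi}=1$ is preserved.

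First I would fix an initial vector $\psi^0 = \sum_j \alpha_j^0 f_j$ with $\norm{\psi^0}=1$ and set $c_i := |\alpha_i^0|^2$. Because each $|\alpha_i|^2$ is a constant of motion, along \emph{any} solution we have $p^n_{j,k}(|\alpha_1|^2,\dots) = p^n_{j,k}(c_1,c_2,\dots) =: \pi^n_{j,k}$, a genuine constant. The hierarchy \eqref{eq-partiso2} therefore decouples completely into the scalar linear ODEs $\frac{\partial}{\partial t^n_k}\alpha_j = {\rm i}\,\pi^n_{j,k}\,\alpha_j$, one for each $j$ and each pair $(n,k)$. Invoking Proposition \ref{rem:limitk}, only the finitely-graded family with $k\leq n/2+1$ contributes nontrivially, which is exactly the index range appearing in \eqref{solution}. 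Integrating this system of commuting linear equations gives $\alpha_j = \alpha_j^0\exp\bigl({\rm i}\sum_{n,k\leq n/2+1}\pi^n_{j,k}\,t^n_k\bigr)$, which is the claimed formula. One must also check consistency: the $\pi^n_{j,k}$ are constants independent of the $t^n_k$, so the mixed partials agree trivially and the flows commute, consistent with the general involutivity established via the Magri method.

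Next I would verify convergence. For the series $\sum_j \alpha_j(t)f_j$ to define an element of $\H_-$, it suffices that $\sum_j |\alpha_j(t)|^2 < \infty$; but $|\alpha_j(t)| = |\alpha_j^0|$ exactly (the exponential factor has modulus one), so $\sum_j|\alpha_j(t)|^2 = \sum_j c_j = 1$ for all times, and in fact $\psi(t)$ stays on the unit sphere — no condition on the $t^n_k$ is needed for this. The role of condition \eqref{times_cond} is instead to guarantee that the \emph{exponent} $\sum_{n,k\leq n/2+1}\pi^n_{j,k}\,t^n_k$ converges as an infinite sum over the hierarchy indices, uniformly in $j$: using the uniform bound $|\pi^n_{j,k}| = |p^n_{j,k}(c_1,c_2,\dots)| \leq w^n_k$ from Proposition \ref{prop_rank1} together with Cauchy--Schwarz, $\bigl|\sum_{n,k}\pi^n_{j,k}t^n_k\bigr| \leq \bigl(\sum_{n,k}(w^n_k)^2\bigr)^{1/2}\bigl(\sum_{n,k}|t^n_k|^2\bigr)^{1/2}$, finite whenever \eqref{times_cond} holds — wait, this uses $\sum (w^n_k)^2<\infty$ rather than the stated $\sum w^n_k|t^n_k|^2<\infty$; more directly one estimates $\bigl|\sum_{n,k}\pi^n_{j,k}t^n_k\bigr|\leq \sum_{n,k}w^n_k|t^n_k|$ and notes that \eqref{times_cond} together with (say) boundedness of the $|t^n_k|$, or a Cauchy--Schwarz pairing of $w^n_k|t^n_k|^2$ against $1$, yields absolute convergence; the precise bookkeeping is the one routine point to nail down. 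Once the exponent converges, $t\mapsto \psi(t)$ is a well-defined sphere-valued curve and one checks it is smooth in each $t^n_k$ and satisfies \eqref{eq-partiso2} by termwise differentiation, justified by the uniform bounds.

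The main obstacle I anticipate is not any single hard estimate but rather the careful packaging of the convergence argument in \eqref{times_cond}: one must be precise about which norm is used on the space of times, confirm that the termwise-differentiated series converges uniformly on the relevant region so that differentiation under the sum is legitimate, and confirm that the resulting curve indeed lies in $\H_-$ (immediate, as noted) and that the operator $\mu$ reconstructed from $u = \proj{\psi}{e_1}$, $B$, $D$ stays in $\urp$ along the flow. Everything else — the decoupling into scalar linear ODEs, the constancy of $|\alpha_i|^2$, and the integration — follows formally and quickly from Propositions \ref{prop_rank1} and \ref{rem:limitk} once those constancy facts are in hand.
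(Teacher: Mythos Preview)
Your proposal is correct and follows essentially the same route as the paper: the paper writes $\alpha_j = r_j e^{{\rm i}\varphi_j}$ and reads off from Proposition~\ref{prop_rank1} that $\partial r_j/\partial t^n_k = 0$ and $\partial \varphi_j/\partial t^n_k = p^n_{j,k}\bigl((r_1)^2,(r_2)^2,\dots\bigr)$, then integrates --- which is exactly your argument that the $|\alpha_i|^2$ are frozen so the $p^n_{j,k}$ become constants and the scalar ODEs integrate directly. Your honest hesitation about how \eqref{times_cond} yields convergence of the exponent is well placed: the paper simply asserts ``convergence of this series is ensured by condition~\eqref{times_cond}'' without spelling out the estimate, so you are not missing a trick the paper provides.
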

\begin{proof}We consider only $k\leq n/2+1$ since due to Proposition~\ref{rem:limitk} all other flows are trivial.
Using the polar form of the coefficients
$\alpha_j = r_j {\rm e}^{{\rm i}\varphi_j}$
we obtain from Proposition~\ref{prop_rank1} the equations in the following form:
\[
 \frac\partial{\partial t^n_k} r_j = 0 ,\qquad
 \frac\partial{\partial t^n_k} \varphi_j = p_{j,k}^n\bigl((r_1)^2,(r_2)^2,\dots\bigr).
\]
In effect we conclude that
$r_j\bigl(t_1^1,t_1^2,t_2^2,\dots\bigr) = r_j(0, 0, 0, \dots) = \bigl|\alpha_j^0\bigr|$
and
\[\varphi_j\bigl(t_1^1,t_1^2,t_2^2,\dots\bigr)
= \varphi_j(0, 0, 0, \dots) + \sum\limits_{{n,k\leq n/2+1}} p_{j,k}^n\bigl(|\alpha_1^0|^2,|\alpha_2^0|^2,\dots\bigr)t^n_k.\]
Convergence of this series is ensured by condition \eqref{times_cond}.
Combining those results, we obtain~\eqref{solution}.

Note that since the absolute values of $\alpha_j$ are preserved, the norm of the vector $\psi$ is constant.\looseness=-1
\end{proof}

The solution can be alternatively expressed in terms of the operator $u$ as follows. Consider a sequence of diagonal operators in $\H_-$ with eigenvalues
\[
R_{n,k}=\operatorname{diag}\bigl(
p_{j,k}^n\bigl(\abs{\alpha_1^0}^2,\dots\bigr),\, j=1,2,\dots\bigr)
\]
for $n\in\N$, $k\leq n/2+1$.
Now the solution in terms of $u$ looks like this
\begin{align*}
u\bigl(t_1^1,t_1^2,t_2^2,\dots\bigr)&= \exp\bigl({\rm i}R_{1,1}{t_1^1}+
{\rm i}R_{1,2}{t_1^2}+{\rm i}R_{2,2}{t_2^2}+\cdots\bigr) u(0,0,\dots)\\
& =\exp\Biggl({{\rm i}\sum\limits_{{n,k\leq n/2+1}} R_{n,k}t^n_k}\Biggr)u(0,0,\dots). \end{align*}
Note that the operators $R_{n,k}$ depend on the initial value of $u$.

This approach does not work for partial isometries of higher rank. The solution obtained here was due to the fact that the operator $B$ acted effectively as a scalar what led to significant simplification of the equations. That will not be the case even for partial isometries of rank~2. In a more general case, another approach, possibly involving integrals of motion, would be needed.

\section[Flow on the restricted Grassmannian Gr]{Flow on the restricted Grassmannian $\boldsymbol{\Gr}$}\label{sec:gr-lin}

\subsection{Linearity of the flows on the restricted Grassmannian}
Let us consider $\Urt$, the central extension of $\Ur$ with Lie algebra $\urt$.
It was shown in \cite[Theorem 2.13]{Ratiu-grass} that the coadjoint orbits of $\Urt$ acting on $\urtp$ passing through the point $(0,\gamma)$, $\gamma\neq 0$, $\gamma\in {\rm i}\R$, are diffeomorphic to the restricted Grassmannian $\Gr$. Note that since the extension is central, the coadjoint action \eqref{coad-ext} does not change the second argument $\gamma$ and it gives rise to the so-called affine coadjoint action of $\ur$ on $\urp$. In effect, it is possible to view the orbit as a subset of $\urp$. The diffeomorphism is given by
\[
 \Phi_\gamma \colon\ \Gr \ni W \tto \mu = \gamma (P_W - P_+) \in \mathcal O_{(0,\gamma)}\subset \urp. \]
\begin{prop}
An element $\mu\in\urp$ belongs to the coadjoint orbit $\mathcal{O}_{(0,\gamma)}$ if and only if~$\frac{1}{\gamma}\mu + P_+$ is an orthogonal projection.
\end{prop}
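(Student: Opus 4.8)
The plan is to show the two inclusions by unwinding the definition of the coadjoint orbit $\mathcal{O}_{(0,\gamma)}$ and invoking Proposition~\ref{prop:gr-proj}. The forward direction is essentially a computation using the explicit form of the affine coadjoint action. First I would note that, by \eqref{coad-ext}, every element of $\mathcal{O}_{(0,\gamma)}$ has the form $\mu = \Ad^*_{\Gamma}(0,\gamma)_1 = \gamma(P_+ - g^{-1}P_+g)$ for some $g\in\Ur$. Then $\frac{1}{\gamma}\mu + P_+ = P_+ - (P_+ - g^{-1}P_+g) + P_+ - P_+ = g^{-1}P_+ g$, which is manifestly an orthogonal projection (being unitarily conjugate to the orthogonal projection $P_+$). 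Conversely, suppose $Q := \frac{1}{\gamma}\mu + P_+$ is an orthogonal projection. I would set $W := \im Q$; then $P_W = Q$, and since $\mu\in\urp\subset\ur$ we have $[\mu, P_+]\in\Ld$, hence $Q - P_+ = \frac{1}{\gamma}\mu$ is Hilbert--Schmidt (the diagonal blocks of $\mu$ lie in $L^1\subset L^2$ and the off-diagonal blocks are Hilbert--Schmidt by definition of $\ur$). By Proposition~\ref{prop:gr-proj}, $W\in\Gr$, and then $\mu = \gamma(P_W - P_+) = \Phi_\gamma(W)$ lies in the orbit.

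The key steps in order are therefore: (1) parametrize a generic orbit element via \eqref{coad-ext} and compute $\frac{1}{\gamma}\mu + P_+ = g^{-1}P_+g$, establishing the ``only if'' direction; (2) for the ``if'' direction, given that $Q = \frac{1}{\gamma}\mu + P_+$ is an orthogonal projection, check that $Q - P_+ = \frac{1}{\gamma}\mu\in\Ld$ so that Proposition~\ref{prop:gr-proj} applies to $W = \im Q$; (3) conclude $\mu = \gamma(P_W - P_+)\in\mathcal{O}_{(0,\gamma)}$ via the diffeomorphism $\Phi_\gamma$, which by \cite[Theorem~2.13]{Ratiu-grass} is onto the orbit.

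The one point requiring a little care is the surjectivity of $\Phi_\gamma$ used in step~(3): I would need to know that every $W\in\Gr$ gives a $\mu = \gamma(P_W-P_+)$ that genuinely lies on the orbit $\mathcal{O}_{(0,\gamma)}$, not just in $\urp$. This is exactly the content of the cited result from \cite{Ratiu-grass} that the coadjoint orbit through $(0,\gamma)$ is diffeomorphic to $\Gr$ via precisely this map, so it may be invoked directly; in fact the computation in step~(1) already exhibits $\gamma(P_W - P_+)$ as $\Ad^*_{\Gamma}(0,\gamma)_1$ once one writes $P_W = g^{-1}P_+g$ for the appropriate $g\in\Ur$ (such $g$ exists because $\Ur$ acts transitively on $\Gr$). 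The mild subtlety is thus bookkeeping the central parameter $\gamma$ and remembering that the affine coadjoint action keeps it fixed, so that membership in $\mathcal{O}_{(0,\gamma)}$ is entirely controlled by the first component.

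I expect the main obstacle — such as it is — to be purely notational: keeping track of the factor $\gamma$ (and the sign in $P_+ - g^{-1}P_+g$ versus $g^{-1}P_+g - P_+$) and making sure the Hilbert--Schmidt condition in Proposition~\ref{prop:gr-proj} is verified from the membership $\mu\in\urp$ rather than assumed. There is no analytic difficulty: once $\frac{1}{\gamma}\mu + P_+$ is recognized as $g^{-1}P_+g = P_{\im}$, both directions are immediate, and the characterization follows.
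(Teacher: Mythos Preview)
Your overall strategy---invoke the diffeomorphism $\Phi_\gamma$ from \cite{Ratiu-grass} together with Proposition~\ref{prop:gr-proj}---is exactly the paper's, and your backward direction matches it verbatim. The forward direction, however, contains an arithmetic slip: with $\mu = \gamma\bigl(P_+ - g^{-1}P_+g\bigr)$ taken from \eqref{coad-ext} one gets
\[
\tfrac{1}{\gamma}\mu + P_+ \;=\; \bigl(P_+ - g^{-1}P_+g\bigr) + P_+ \;=\; 2P_+ - g^{-1}P_+g,
\]
not $g^{-1}P_+g$; your displayed chain of equalities does not parse. The paper avoids touching \eqref{coad-ext} here altogether: it simply writes down the inverse $\Phi_\gamma^{-1}(\mu) = \im\bigl(\tfrac{1}{\gamma}\mu + P_+\bigr)$ and observes that, $\Phi_\gamma$ being a bijection onto $\mathcal{O}_{(0,\gamma)}$, one has $\mu\in\mathcal{O}_{(0,\gamma)}$ if and only if $\tfrac{1}{\gamma}\mu + P_+ = P_W$ for some $W\in\Gr$; the only thing left to check is then that an arbitrary orthogonal projection of the form $\tfrac{1}{\gamma}\mu + P_+$ with $\mu\in\urp$ automatically projects onto an element of $\Gr$, which is Proposition~\ref{prop:gr-proj}. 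You can repair your forward direction the same way---or, if you prefer an explicit orbit computation, use the description $\mathcal{O}_{(0,\gamma)} = \{\gamma(P_W - P_+): W\in\Gr\}$ coming from $\Phi_\gamma$ and the transitivity of $\Ur$ on $\Gr$ (so $P_W = gP_+g^{-1}$), which gives $\tfrac{1}{\gamma}\mu + P_+ = P_W$ directly.
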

\begin{proof}
The inverse of the map $\Phi_\gamma$ is
\be \label{phiGammaInverse}
 \Phi_\gamma^{-1}\colon\ \mathcal{O}_{(0,\gamma)} \ni \mu \tto \im(p) \in \Gr,\qquad\textrm{where}\quad p = \frac{1}{\gamma}\mu + P_+.
 \ee
Since $\Phi_\gamma$ is a diffeomorphism, $\mu\in\urp$ belongs to the coadjoint orbit $\mathcal{O}_{(0,\gamma)}$ if and only if~\smash{$\frac{1}{\gamma}\mu + P_+$} is a projection $P_W$ onto an element $W\in \Gr$.

Thus, what remains to be proven is that if $\frac{1}{\gamma}\mu + P_+$ is a projection it is necessarily a projection onto a subspace belonging to $\Gr$. That fact follows from Proposition \ref{prop:gr-proj} since~${\urp\subset \Ld}$.
\end{proof}

The coadjoint orbit is preserved by the flows \eqref{H-eq} and we can restrict the integrable system to it.
\begin{prop}\label{prop:mu-linear}
For initial conditions in the coadjoint orbit $\mathcal O_{(0,\gamma)}$, the equations \eqref{H-eq} are linear.
\end{prop}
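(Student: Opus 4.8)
The plan is to exploit the defining property of the orbit --- that $p:=\frac1\gamma\mu+P_+$ (as in \eqref{phiGammaInverse}) is an orthogonal projection --- to collapse the right-hand side of \eqref{H-eq} into a linear expression in $\mu$.

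First I would record the algebraic consequences of $p^2=p$. Squaring $\mu+\gamma P_+=\gamma p$ gives $(\mu+\gamma P_+)^2=\gamma(\mu+\gamma P_+)$, that is
\[
\mu^2=\gamma\mu-\gamma\,\mu P_+-\gamma\,P_+\mu ,
\]
equivalently $\mu P_+=(1-p)\mu$ and $P_+\mu=\mu(1-p)$ on $\mathcal O_{(0,\gamma)}$. Two facts follow at once and carry the whole argument: (i) $[\mu^2,P_+]=0$, so $\mu^2$ commutes with every noncommutative polynomial in $\mu$ and $P_+$, in particular with $H^n_k(\mu)$; and (ii) along any flow \eqref{H-eq}, $\frac{\partial}{\partial t^n_k}\mu^2={\rm i}^{n+1}[\mu^2,H^n_k(\mu)]=0$, so $\mu^2$ is a constant of motion (the orbit being preserved, as noted above).

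Next I would pass to generating functions. From the recurrence \eqref{H-recc} one gets $\sum_k x^k H^n_k(\mu)=[(1+xP_+)\mu]^n(1+xP_+)$, whence a short telescoping yields the purely algebraic identity
\[
\sum_k x^k[\mu,H^n_k(\mu)]=[\mu(1+xP_+)]^{n+1}-[(1+xP_+)\mu]^{n+1}.
\]
On the orbit the relations above read $\mu(1+xP_+)=(1+x(1-p))\mu$ and $(1+xP_+)\mu=\mu(1+x(1-p))$; iterating and pushing the central element $\mu^2$ through, an easy induction gives $[\mu(1+xP_+)]^{n+1}=F(x)\mu^{n+1}$ and $[(1+xP_+)\mu]^{n+1}=\mu^{n+1}\widetilde F(x)$, where $F(x)=\prod_{j=1}^{n+1}(1+x\Pi_j)$ and $\widetilde F(x)$ is the same product in the opposite order, with $\Pi_j=1-p$ for $j$ odd and $\Pi_j=P_+$ for $j$ even. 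Since $\mu^2$ is central, $\mu^{n+1}$ can be moved so that the right-hand side becomes $(\mu^2)^{\lfloor(n+1)/2\rfloor}$ times either $F(x)-\widetilde F(x)$ or, using the ``reflection'' relations $P_+\mu=\mu(1-p)$ and $(1-p)\mu=\mu P_+$ (which make $\mu$ an intertwiner of the two idempotents), $\mu\bigl(\overline F(x)-\widetilde F(x)\bigr)$, where $\overline F$ swaps the two letters. Because $P_+$ and $1-p$ are idempotents, every monomial occurring in $F,\overline F,\widetilde F$ collapses to an alternating word in $P_+$ and $1-p$; using once more the reflection relations together with $(1-p)P_+=-\frac1\gamma\mu P_+$, the identity $[1-p,P_+]=\frac1\gamma[P_+,\mu]$, and the centrality of $\mu^2$, the difference of such a word and its reversal (resp.\ its letter-swap) is seen to equal a power of $\mu^2/\gamma^2$ times the single commutator $[P_+,\mu]$. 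Collecting the coefficient of $x^k$ then gives, on the orbit,
\[
{\rm i}^{n+1}[\mu,H^n_k(\mu)]=\bigl[\,A^n_k\,,\,\mu\,\bigr],\qquad A^n_k:=c^n_k(\mu^2)\,P_+ ,
\]
for a polynomial $c^n_k$ with $\gamma$-dependent coefficients, where I used $[\mu^2,\mu]=0$ to pull $c^n_k(\mu^2)$ inside the commutator.

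Finally, since $\mu^2$ is a constant of motion, $A^n_k$ is a fixed (skew-adjoint) operator along each trajectory, so \eqref{H-eq} restricted to $\mathcal O_{(0,\gamma)}$ is the linear Lax equation $\frac{\partial}{\partial t^n_k}\mu=[A^n_k,\mu]$, whose solution is $\mu(t)={\rm e}^{tA^n_k}\mu(0)\,{\rm e}^{-tA^n_k}$. The only real work is the middle step: showing that all genuinely higher-degree contributions cancel, i.e.\ that the antisymmetric part (word minus its reversal) of an alternating product of the two idempotents $P_+$ and $1-p$ is exactly a power of the central element $\mu^2$ times $[P_+,\mu]$. This is a finite but slightly delicate piece of bookkeeping, most cleanly organized through the reflection identity $w(P_+,1-p)\,\mu=\mu\,w(1-p,P_+)$; the low cases $n=0,1,2$ (cf.\ also \eqref{eqk1}) already exhibit the mechanism.
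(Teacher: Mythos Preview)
Your route is substantially different from the paper's and aims at a stronger conclusion. The paper's argument is a few lines of block algebra: from $p^2=p$ one gets $\mu^2=\gamma(\mu_{--}-\mu_{++})$, hence $\mu^2$ is block-diagonal and, by Proposition~\ref{prop:diag-const}, a constant of motion; the resulting intertwining relations $\mu_{++}\mu_{+-}=-\mu_{+-}\mu_{--}$, $\mu_{-+}\mu_{++}=-\mu_{--}\mu_{-+}$ together with the constancy of $\mu_{+-}\mu_{-+}$ and $\mu_{-+}\mu_{+-}$ force every block of $H^n_k(\mu)$ to be at most first-order in the sole dynamical variables $\mu_{+-},\mu_{-+}$, and hence the right-hand side of \eqref{H-eq} is linear. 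No generating functions, no word combinatorics.

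Your opening observations are correct and useful (the reflection identities $\mu P_+=(1-p)\mu$, $P_+\mu=\mu(1-p)$; the centrality and constancy of $\mu^2$; the identity $\sum_k x^k[\mu,H^n_k(\mu)]=(\mu a)^{n+1}-(a\mu)^{n+1}$). The target you head for --- that on the orbit ${\rm i}^{n+1}[\mu,H^n_k(\mu)]=[c^n_k(\mu^2)P_+,\mu]$ with $c^n_k(\mu^2)P_+$ a constant skew-adjoint operator --- also appears to be true and is strictly more than the paper asserts, since it yields the explicit solution $\mu(t)={\rm e}^{tA}\mu(0){\rm e}^{-tA}$.

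The problem is the middle step, which you flag as ``slightly delicate bookkeeping'' but do not carry out; as stated it is actually incorrect. Your claim that the difference of an alternating word in the two idempotents and its reversal (or letter-swap) equals a power of $\mu^2/\gamma^2$ times $[P_+,\mu]$ fails already at length three: with $P=P_+$, $Q=1-p$ one has $PQP-QPQ=-\tfrac1\gamma P_+\mu P_+-\tfrac1{\gamma^2}\mu P_+\mu$, which has a nonzero diagonal block, whereas $[P_+,\mu]$ is purely off-diagonal. The collapse to $c(\mu^2)[P_+,\mu]$ only happens for the \emph{full} expression $\mu^{n+1}\bigl(\cdots\bigr)$ and relies on the orbit identities $\mu_{++}=-\tfrac1\gamma(\mu^2)_{++}$, $\mu_{--}=\tfrac1\gamma(\mu^2)_{--}$ to rewrite every surviving diagonal coefficient as a polynomial in $\mu^2$ alone --- an ingredient your sketch never invokes. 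So the plan is salvageable, but the argument as written has a genuine gap at its core; by contrast, the paper's block-decomposition proof is complete in a few lines.
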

\begin{proof}
By computing the square of the projection $p:= \frac{1}{\gamma}\mu + P_+$ and using $p^2 = p$ as well as the block decomposition of $\mu$, we get that
\be \label{mu2} \mu^2 = \gamma (\mu-\mu P_+ - P_+\mu) = \gamma(\mu_{--}-\mu_{++}).\ee
From this and from Proposition~\ref{prop:diag-const}, we conclude that $\mu^2$ is constant and block diagonal. Moreover, breaking it down into block form we conclude
\begin{gather}\label{mu2-block}
\mu_{++} \mu_{+-} = - \mu_{+-}\mu_{--},\qquad
\mu_{-+} \mu_{++} = - \mu_{--}\mu_{-+},\qquad
\mu_{-+}\mu_{+-} = \const,\nonumber\\
\mu_{+-}\mu_{-+} = \const.
\end{gather}

From \eqref{mu2-block}, it follows that the polynomials $H^n_k(\mu)$ are at most linear in $\mu_{-+}$ and $\mu_{+-}$. Namely, whenever we have an expression which contains an even number of those operators, it needs to be built out of terms like $\mu_{-+}\mu_{++}^l\mu_{+-}$ or $\mu_{+-}\mu_{--}^j\mu_{-+}$, which can be shown to be equal to \smash{$(-1)^l \mu_{--}^l\mu_{-+}\mu_{+-}$} or \smash{$\mu_{--}^j\mu_{-+}\mu_{+-}$} respectively. Hence, they are constant. Analogously, an expression with an odd number of those operators becomes linear in $\mu$ (up to the multiplication by constant).
\end{proof}

\subsection[Solutions of the hierarchy on the restricted Grassmannian for k = 1]{Solutions of the hierarchy on the restricted Grassmannian for $\boldsymbol{k = 1}$}
Let us now consider the open neighborhood $\Omega_{\H_+}$ of $\H_+$ consisting of elements in $\Gr$ such that the projection onto $\H_+$ is an isomorphism
\[
\Omega_{\H_+}= \{ W \in \Gr \mid {P_W}_{++}= P_+ P_W P_+ \textrm{ is invertible in } \H_+\}.
\]
A chart $\phi_{\H_+}$ on the restricted Grassmannian around $\H_+$ is
\be \label{phiH+}
\phi_{\H_+}\colon\ \Omega_{\H_+} \to \Ldpm,\qquad
 W \mapsto \phi_{\H_+}(W)= {P_W}_{-+}({P_W}_{++})^{-1},
\ee
and it inverse is
$\phi_{\H_+}^{-1}(A) = \Gamma(A)$,
where $\Gamma(A)$ is the graph of an operator. Moreover, the orthogonal projection on $W$ can be expressed as
\[ P_W = (1_{\H_+} + A)(1_{\H_+}+A^*A)^{-1}(P_+ + A^* P_-),\]
see, e.g., \cite{GJS-partiso}. Using \eqref{phiH+} and \eqref{phiGammaInverse}, one has, for $\mu \in \mathcal O_{(0,\gamma)}\cap \Phi_\gamma(\Omega_{\H_+})$,
\be \label{A} A = \phi_{\H_+}\circ\Phi_\gamma^{-1}(\mu) =
\mu_{-+}(\mu_{++}+\gamma P_+)^{-1}.\ee
Conversely, composing the chart \smash{$\phi_{\H_+}^{-1} $} with the diffeomorphism $\Phi_\gamma$ one obtains a parametrization of the restricted Grassmannian realized as a coadjoint orbit inside $\urp$
\be \label{coord}
\Phi_\gamma\circ\phi_{\H_+}^{-1}(A) = \gamma\left(
\begin{matrix}
 (1+A^*A)^{-1} - 1& (1+A^*A)^{-1}A^*\\
 A(1+A^*A)^{-1} & A(1+A^*A)^{-1}A^*
\end{matrix}
\right),
\ee
where $A\in \Ldpm$.

Let us now consider the equations \eqref{H-eq} in terms of an operator $A$ given as \eqref{A}. Note that relation \eqref{A} is actually linear in $\mu$ from the point of view of evolution according to the equations~\eqref{H-eq} since $\mu_{++}=\const$. Thus, from Proposition \ref{prop:mu-linear} we conclude that the equations~\eqref{H-eq} are also linear when expressed in the chart $\phi_{\H_+}$.

One can also notice that since $\abs A$ is constant, we again get the flow on the Banach Lie groupoid of partial isometries. However, it is not directly related to the case discussed in the previous section, where we assumed the condition $\mu_{++}=0$. Moreover, all equations on the partial isometry $u$ defined by the polar decomposition $A=u\abs A$ are linear.

Let us write down several of the equations in coordinates \eqref{coord}. First of all, from \eqref{mu2} it follows that all equations for $k=1$ and $n$ odd are trivial since $\mu^{2l}$ is diagonal
\smash{$\frac\partial{\partial t^{2l+1}_1} A = 0$}.
The equations for $k=1$ and $n$ even assume the form
\smash{$\frac\partial{\partial t^{2l}_1} A = (-1)^l {\rm i} R_l A$},
where
\[ R_l = \bigl(A (A^* A+1)^{-1} A^*\bigr)^l = \const. \]
In consequence, their solutions are
\[ A\bigl(t_1^1,t_1^2,t_1^3,\dots\bigr)= \exp\bigl(-{\rm i}R_{1}{t_1^2}+
{\rm i}R_{2}{t_1^4}-{\rm i}R_{3}{t_1^6}+\cdots\bigr) A(0,0,\dots) \]
with respect to the times $t_1^1,t_1^2,t_1^3,\dots$, where, in order to ensure convergence, we assume that the times $t^n_1$ decrease sufficiently rapidly, i.e.,
$\sum_{n\in\N} \norm{R_n} \bigl|t^{2n}_1\bigr|^2 \leq \infty$.

\subsection*{Acknowledgement}
This research of the authors was partially supported by joint National Science Centre, Poland (number 2020/01/Y/ST1/00123) and Fonds zur F\"orderung der wissenschaftlichen Forschung, Austria (number I 5015-N) grant ``Banach Poisson--Lie groups and integrable systems''.
The authors would like to thank the Erwin Schr\"odinger Institute for its hospitality during the thematic programme ``Geometry beyond Riemann: Curvature and Rigidity'' in September 2023. The stimulating atmosphere of the Workshop on Geometric Methods in Physics in Białowie\.za, Poland, is also acknowledged, where both authors could meet to work on this paper.
The authors are grateful to the referees for their remarks and suggestions, which improved the presentation of the results.

\pdfbookmark[1]{References}{ref}
\LastPageEnding

\end{document}